\newif\ifCtableInstalled
\newcommand{\Vol}{\mathop{\rm vol}}
\newcommand{\poly}{\mathop{\rm poly}}
\newcommand{\polylog}{\mathop{\rm polylog}}
\DeclareMathOperator{\ar}{asp}
\newcommand{\RAR}{\ar_{\rm rect}}
\newcommand{\tree}{\ensuremath{\mathcal{T}}}
\newcommand{\node}{\nu}
\newcommand{\othernode}{\mu}
\DeclareMathOperator{\myroot}{root}
\DeclareMathOperator{\area}{area}
\DeclareMathOperator{\bd}{\partial}
\DeclareMathOperator{\diam}{diam}
\DeclareMathOperator{\myangle}{angle}
\newcommand{\dist}[2]{|#1 #2|}
\DeclareMathOperator{\mydepth}{height}
\newcommand{\weight}{w}
\newcommand{\eps}{\varepsilon}
\newcommand{\gb}{8}            
\newcommand{\opt}{\mbox{{\sc opt}}\xspace}
\newcommand{\Reals}{{\mathbb{R}}}
\newcommand{\etal}{\emph{et~al.}\xspace}
\newtheorem{theorem}{Theorem}
\newtheorem{lemma}{Lemma}
\newtheorem{observation}{Observation}
\title{Fat Polygonal Partitions with Applications\\to Visualization
and Embeddings\thanks{A preliminary version of
this paper appeared in the Proceedings of the 24th ACM Symposium
on Computational Geometry~\cite{conference_version}.}}
\author{%
Mark de Berg\thanks{TU Eindhoven. Email: {\tt mdberg@win.tue.nl}.}
\and
Krzysztof Onak\thanks{IBM T.J.\ Watson Research Center. Email: {\tt krzysztof@onak.pl}. Supported in part by NSF grants 0514771, 0728645, and 0732334. The research was done when the author was at a student at MIT.}
\and
Anastasios Sidiropoulos\thanks{Dept.~of Computer Science \& Engineering, and Dept.~of Mathematics, The Ohio State University. Email: {\tt sidiropoulos.1@osu.edu}.}
}
\date{December 13, 2013}
\begin{document}

\maketitle

\begin{abstract}
Let $\tree$ be a rooted and weighted tree, where the weight of any node is
equal to the sum of the weights of its children.
The popular Treemap algorithm visualizes such a tree as
a hierarchical partition of a square into rectangles, where the area of the rectangle corresponding to any
node in $\tree$ is equal to the weight of that node. The aspect ratio of the rectangles in such
a rectangular partition necessarily depends on the weights and can become arbitrarily high.

We introduce a new hierarchical partition scheme, called a \emph{polygonal partition}, which uses
convex polygons rather than just rectangles. We present two methods for constructing polygonal partitions,
both having guarantees on the worst-case aspect ratio of the constructed polygons; in particular, both
methods guarantee a bound on the aspect ratio that is independent of the weights of the nodes.

We also consider \emph{rectangular partitions with slack}, where
the areas of the rectangles may differ slightly from the
weights of the corresponding nodes. We show that this makes it possible to obtain partitions
with constant aspect ratio. This result generalizes to hyper-rectangular partitions
in $\Reals^d$. We use these partitions with slack for
embedding ultrametrics into $d$-dimensional Euclidean space:  we give a $\polylog(\Delta)$-approximation
algorithm for embedding $n$-point ultrametrics into $\mathbb R^d$ with minimum distortion, where $\Delta$ denotes
the spread of the metric, i.e., the ratio between the largest and the smallest distance between two points. The previously best-known approximation ratio for this problem was polynomial in~$n$.
This is the first algorithm for embedding a non-trivial family of weighted-graph metrics into a space of
constant dimension that achieves polylogarithmic approximation ratio.
\end{abstract}

\section{Introduction}
Hierarchical structures are commonplace in many areas. It is not surprising, therefore,
that the visualization of hierarchical structures---in other words, of rooted trees---is
one of the most widely studied problems in information visualization and graph drawing.
In the weighted variant of the problem, we are given a rooted tree in which
each leaf has a positive weight and the weight of each internal node is the sum of the
weights of the leaves in its subtree. One of the most successful practical algorithms for
visualizing such weighted trees is the so-called Treemap algorithm.
Treemap visualizes the given tree by constructing a hierarchical \emph{rectangular partition}
of a square, as illustrated in Figure~\ref{fig:sample_rect}. More precisely, Treemap
assigns a rectangle to each node in the tree such that
\begin{itemize}
\item
the area of the rectangle is equal to the weight of the node;
\item
the rectangles of the children of each internal node $\node$ form
a partition of the rectangle of~$\node$.
\end{itemize}
The Treemap algorithm was proposed by Shneiderman~\cite{treemap_main} and its first efficient implementation
was given by Johnson and Shneiderman~\cite{treemap_2}.
Treemap has been used to visualize a wide range of hierarchical data,
including stock portfolios~\cite{treemap_stock}, news items~\cite{newsmap}, blogs~\cite{treemap_blogs},
business data~\cite{treemap_business}, tennis matches~\cite{treemap_tennis}, photo collections~\cite{treemap_photo},
and file-system usage~\cite{treemap_main, circular}.
Shneiderman maintains a webpage \cite{treemap_history} that describes the history of his invention and
gives an overview of applications and proposed extensions to his original idea. Below we only discuss the
results that are directly related to our work.

In general, there are many different rectangular partitions corresponding to a given tree.
To obtain an effective visualization it is desirable that the aspect ratio of the rectangles
be kept as small as possible; this way the individual rectangles are easier to distinguish and
the areas of the rectangles are easier to estimate. Various heuristics have been proposed for minimizing the aspect ratio
of the rectangles in the partition~\cite{treemap_sq,treemap_ordered,turo_johnson92}.
Unfortunately, the aspect ratio can become arbitrarily bad if the weights have unfavorable values.
For example, consider a tree with a root and two leaves, where the first leaf has weight~1 and the second has weight $W$.
Then the optimal aspect ratio of any rectangular partition is unbounded as $W\rightarrow \infty$.
Hence, in order to obtain guarantees on the aspect ratio we cannot restrict ourselves to rectangles.
This lead Balzer~et~al.~\cite{voronoi1,voronoi2} to introduce Voronoi treemaps, which use
more general regions in the partition. However, their approach is heuristic and it does
not come with any guarantees on the aspect ratio of the produced regions.
Thus the following natural question is still open:
\begin{quote}
Suppose we are allowed to use arbitrary convex polygons in the partition, rather than just rectangles.
Is it then always possible to obtain a partition that achieves aspect ratio independent of the weights
of the nodes in the input tree? (The aspect ratio of a convex region~$A$ is defined as $\diam(A)^2/\area(A)$, where $\diam(A)$ is its diameter and $\area(A)$ is its area.\footnote{Another common definition of the aspect ratio of a convex region $A$ is the ratio between the radius $R$ of the smallest circumscribing circle and the radius $r$ of the largest inscribed circle. The aspect ratio defined in this manner is sometimes referred to as the \emph{fatness} of the region. There are several other definitions of fatness, all of which are equivalent up to constant factors for convex planar objects~\cite{bksv-rimga-02}.
In particular, in our case it is easy to show that $\diam(A) = \Theta(R)$ and $\area(A) = \Theta(R \cdot r)$, which implies that $\diam(A)^2/\area(A) = \Theta(R/r)$.
})
\end{quote}

\begin{figure}[t]
\begin{center}
\subfigure[A rectangular partition\label{fig:sample_rect}]{
\scalebox{0.24}{\includegraphics{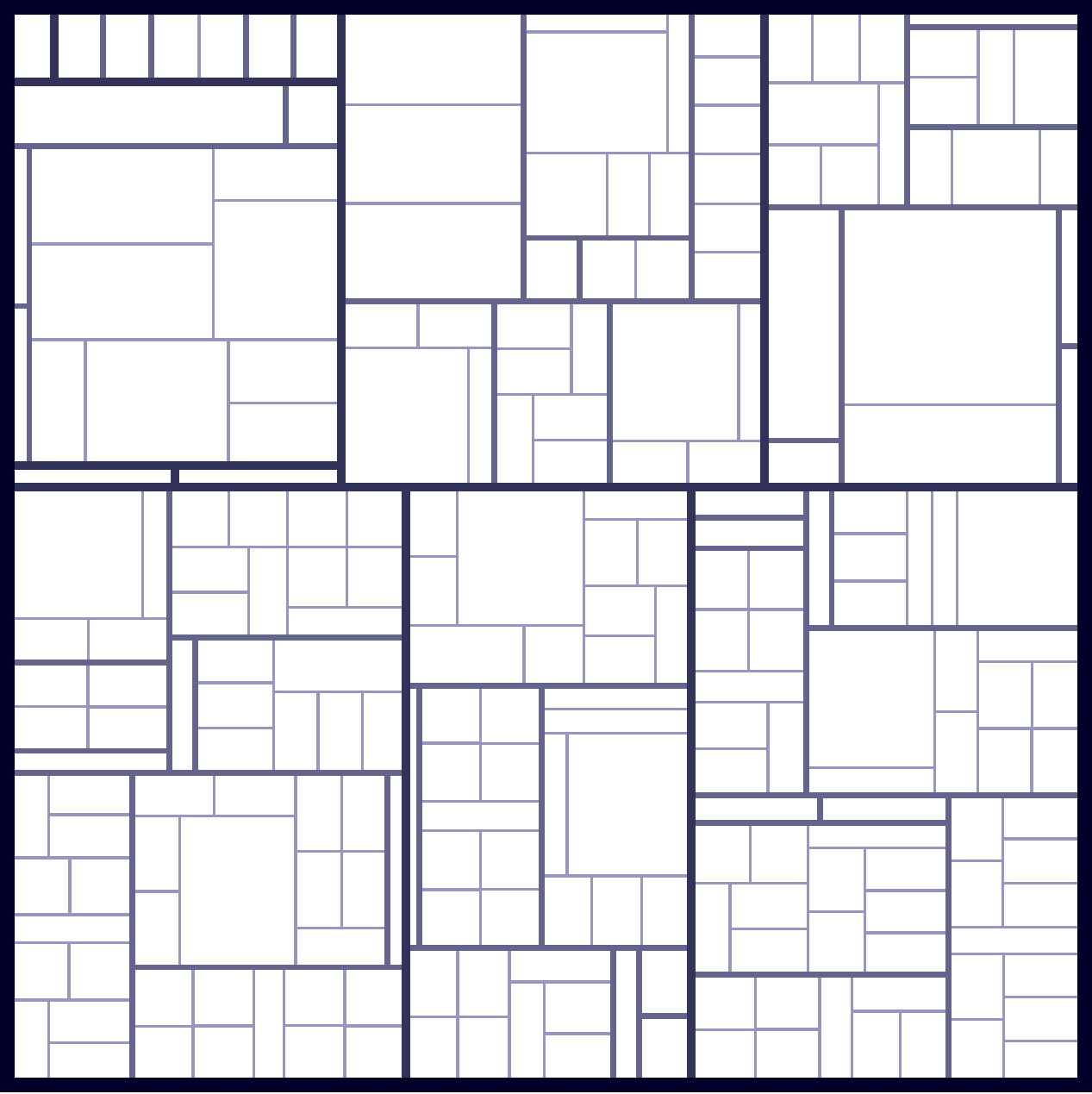}}
}
\subfigure[A greedy polygonal partition\label{fig:sample_greedy}]{
\scalebox{0.24}{\includegraphics{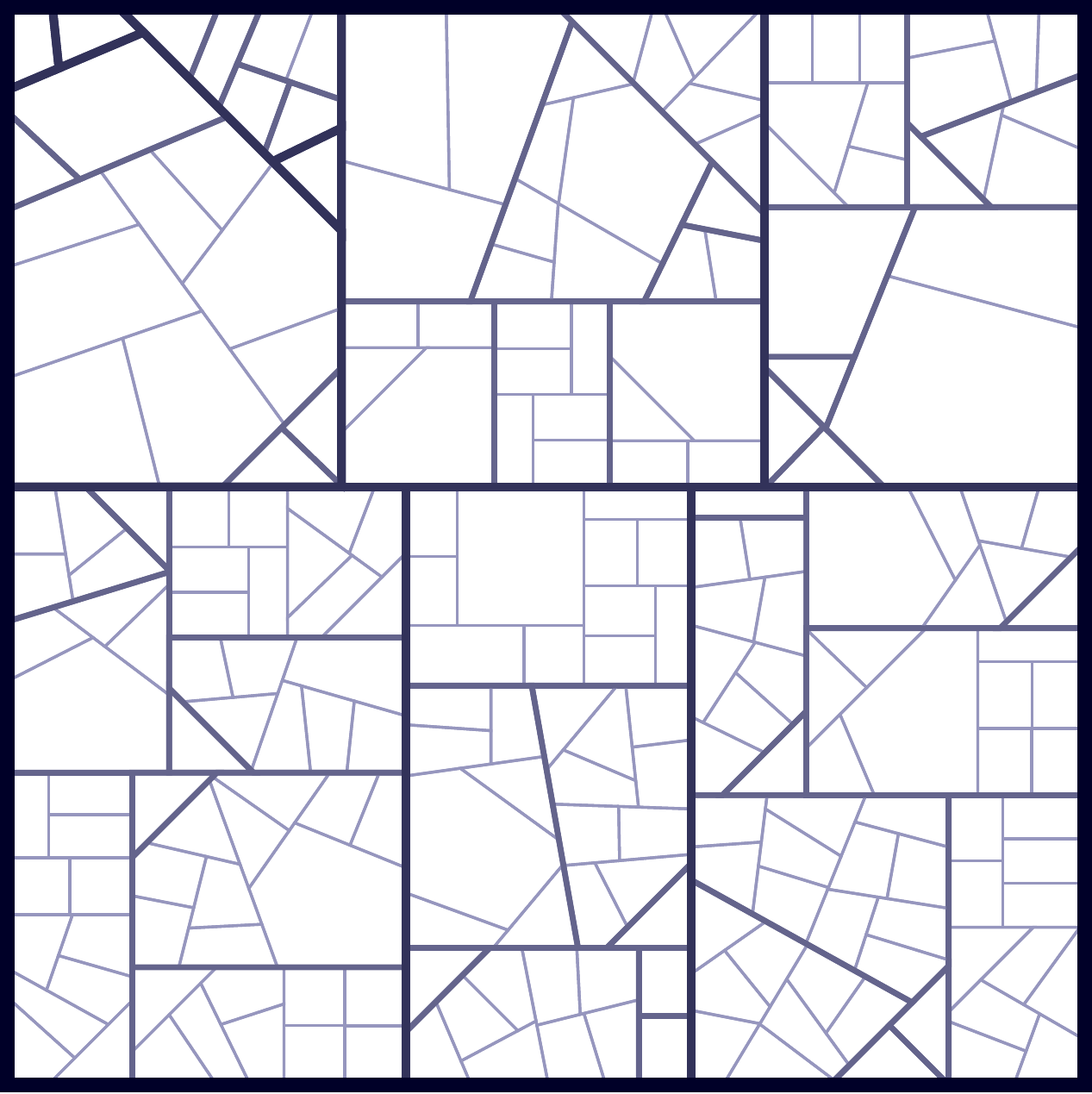}}
}
\subfigure[An angular polygonal partition\label{fig:sample_angular}]{
\scalebox{0.24}{\includegraphics{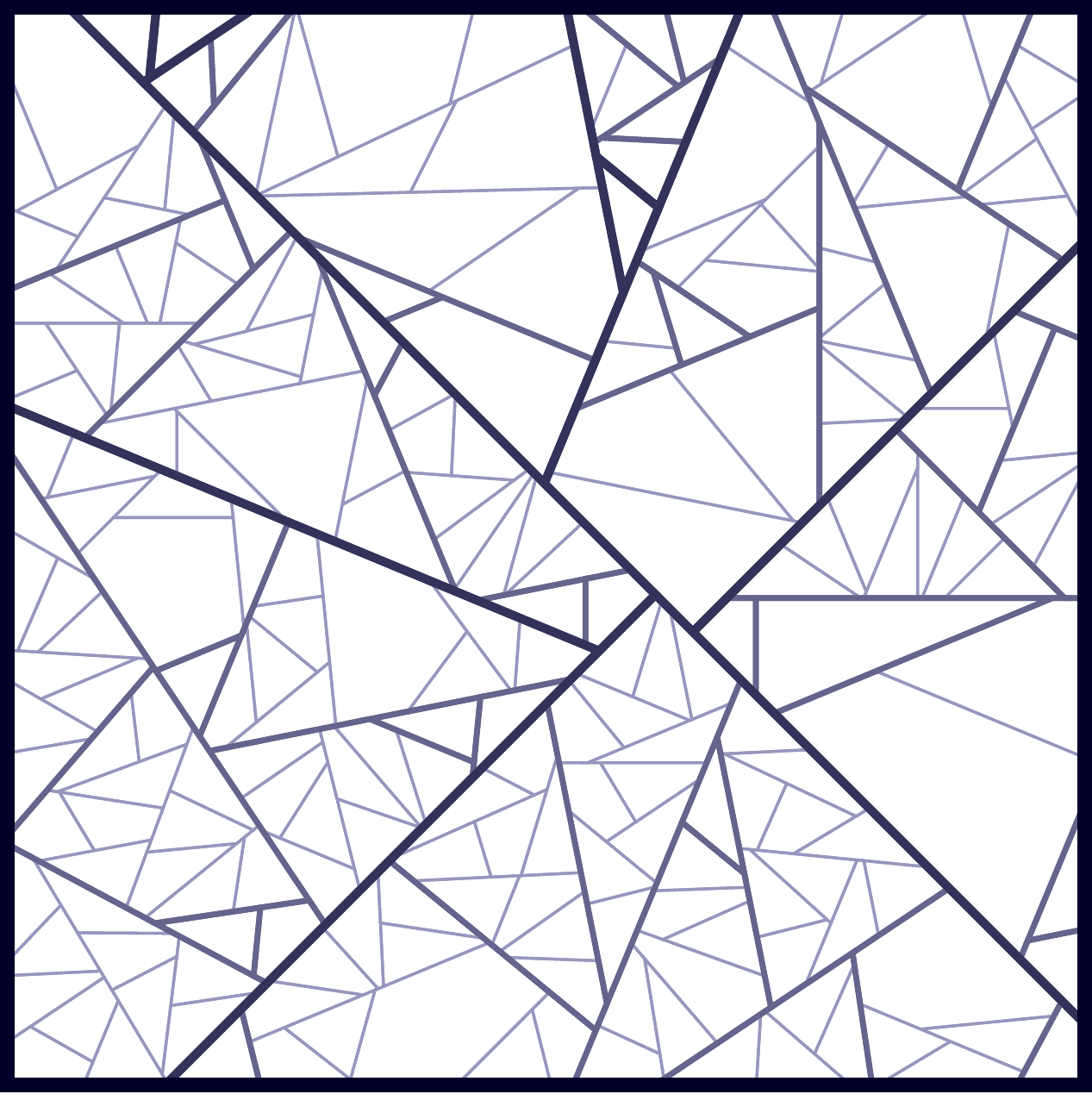}}
}
\caption{Sample polygonal partitions}
\end{center}
\end{figure}

\paragraph{Our results on hierarchical partitions.}
Our main result is an affirmative answer to the question above:
we present two algorithms that, given an $n$-node tree of height $h$, construct a partition
into convex polygons with aspect ratio $O(\poly(h,\log n))$.
Our algorithms, which are described in Section~\ref{sect:two_methods},
are very simple. They first convert the input tree into a binary tree,
and then recursively partition the initial square region using straight-line cuts.
The methods differ in the way in which the orientation of the cutting line is chosen
at each step. The \emph{greedy method} minimizes the maximum aspect ratio of two subpolygons
resulting from the cut, while the \emph{angular method} maximizes the angle that the splitting line makes with
any of the edges of the polygon being cut. Figures~\ref{fig:sample_greedy}~and~\ref{fig:sample_angular}
depict partitions computed by our algorithms.

The main challenge lies in the analysis of the aspect ratio achieved by the algorithms,
which is given in Sections~\ref{sect:angular_analysis}~and~\ref{sect:greedy_analysis}.
We prove that the angular method produces a partition with aspect ratio
$O(h+\log n)$.
For the greedy method we can only prove an aspect ratio of $O((h+\log n)^{\gb})$. Since the
greedy method is the most natural one, we believe this result is still interesting.
Moreover, in the (limited) experiments we have done---see Section~\ref{sect:two_methods}---the
greedy method always outperforms the angular method.
Besides these two algorithms, we also prove a lower bound: we show
in Section~\ref{sect:lower-bound} that for certain trees
and weights, any partition into convex polygons must have polygons with aspect ratio~$\Omega(h)$.\footnote{Recently de Berg, Speckmann, and van der Weele~\cite{BergSW2010} refined our
angular method to get rid of the additive $O(\log n)$ factor in the upper bound, thus obtaining a polygonal partition of aspect ratio $O(h)$.}

After having studied the problem of constructing polygonal partitions, we return to
rectangular partitions. As observed, it is in general not possible to obtain any guarantees on
the aspect ratio of the rectangles in the partition.
If, however, we are willing to let the area of a rectangle deviate slightly from the
weight of its corresponding node then we can obtain bounded aspect ratio,
as we show in Section~\ref{sect:slack}.
More precisely, we obtain the following partition.
Let $\eps \in (0,1/3)$.
We allow that the area $A$ of the rectangle assigned to every non-root node $v$ is shrunken by a factor of at most $1-\eps$ compared to its share of the area $A'$ of the rectangle of the parent $v'$. That is, we only require that
$$(1-\eps)\cdot \frac{A'}{w_{v'}} \le \frac{A}{w_v} \le \frac{A'}{w_{v'}},$$
where $w_v$ and $w_{v'}$ are the weights of $v$ and $v'$, respectively.
Then we show that the aspect ratio of every rectangle can be bounded by $1/\eps$.
We call this kind of partition a \emph{rectangular partition with slack}.

\paragraph{Application to embedding ultrametrics.}
The work of B\u{a}doiu~\etal~\cite{BadoiuSOCG2006} establishes a lower bound for the distortion of the best embedding of an ultrametric into $\mathbb R^d$. Our hierarchical partitions with low aspect ratio can be used for efficiently constructing an embedding that closely matches the lower bound of B\u{a}doiu~\etal More details, including a brief history of relevant embedding results, follow.

Let us first recall a few standard definitions.
A metric space $M=(X,D)$ is a set $X$ together with a symmetric distance function $D\colon X\times X \rightarrow \Reals_{\geq 0}$ that satisfies the triangle inequality and
$D(x_1,x_2) = 0$ if and only if $x_1 = x_2$.
An \emph{embedding} of a metric space $M=(X,D)$ into a \emph{host} metric space $M'=(X',D')$ is an injective mapping $f\colon X\to X'$.
The \emph{distortion} of an embedding $f$ is defined as
\[
\max_{x,y\in X} \frac{D'(f(x),f(y))}{D(x,y)} \cdot \max_{x,y\in X}\frac{D(x,y)}{D'(f(x),f(y))}.
\]
Over the past few decades, low-distortion embeddings of metric spaces into various host spaces have been the subject of extensive study \cite{I-survey}.
Embeddings into Euclidean space are of particular importance in applications, and have received a lot of attention.
Bourgain's theorem~\cite{Bou} asserts that any $n$-point metric space admits an embedding into high-dimensional Euclidean space with distortion $O(\log n)$.
Matou\v{s}ek~\cite{Matousek-smalld} has shown that the minimum distortion for embedding into $d$-dimensional Euclidean space is $n^{\Theta(1/d)} \cdot \log^{O(1)} n$.
Since the distortion for embedding into constant-dimensional Euclidean space can be polynomially large in the worst case, it is natural to ask whether we can \emph{approximate} the best possible distortion for a given input metric.
Matou\v{s}ek and Sidiropoulos \cite{sidiropo_hardness} have shown that minimum-distortion embeddings of general metrics into $\Reals^d$ (with $d\geq 2$) are hard to approximate to within a factor of roughly $n^{1/(22d-10)}$, unless {\sc p=np}~\cite{sidiropo_hardness}.
In other words, it is unlikely that there exists a polynomial-time algorithm with significantly better performance than the worst case guarantee.

In light of the above inapproximability result, it is natural to ask whether there exist interesting families of metrics, for which we can obtain better than polynomial approximation factors for embedding into constant-dimensional Euclidean space.
In this paper we present the first result of this type, for embedding \emph{ultrametrics} into $\Reals^d$.
An ultrametric is a metric satisfying the following strengthened version of
the triangle inequality: for any $x,y,z\in X$ we have $D(x,z) \leq \max\{D(x,y),D(y,z)\}$.
Equivalently,  $M=(X,D)$ is an ultrametric if it can be realized as
the shortest-path metric over the leaves of a rooted edge-weighted tree
such that the distance between the root and any leaf is the same.
Ultrametrics have received a lot of attention in the embeddings literature, and play a central role in many algorithmic applications (see e.g.~\cite{Bar96}).

B\u{a}doiu~\etal~\cite{BadoiuSOCG2006} showed that finding a minimum-distortion embedding of an ultrametric into $\mathbb{R}^2$ is {\sc np}-complete and presented
an $O(n^{1/3})$-approximation algorithm for the problem.
They extended the algorithm to embedding ultrametrics into $\mathbb{R}^d$, obtaining an $(n^{\frac{1}{d}-\Theta(\frac{1}{d^2})})$-approximation.
This result is obtained using a lower bound on the amount of space required
in a non-contracting embedding of every subtree of an ultrametric into $\mathbb R^d$.
We apply our results on rectangular partitions with slack
to a hierarchical structure corresponding to the ultrametric with weights given by the lower bound.
Bounded aspect ratios in our partition imply relatively low distortion and good approximation to the best embedding of the ultrametric
into $\mathbb R^d$.
Moreover, we show a connection between embedding ultrametrics into $\Reals^d$ and
\mbox{(hyper-)}rectangular partitions with slack and we use this connection
to obtain a significant improvement over the result of B\u{a}doiu~\etal~\cite{BadoiuSOCG2006}.
More precisely, using our results on (hyper-)rectangular partitions with slack,
we obtain a polynomial-time $\polylog(\Delta)$-approximation algorithm for the problem of
embedding ultrametrics into $(\mathbb{R}^d, \ell_2)$ with minimum distortion, where $\Delta$
is the \emph{spread} of $X$. (The spread of $X$ is defined as
$\Delta = \diam(X) / \min_{x,y\in X}D(x,y)$.)
As long as the spread is sub-exponential in $n$, this is an exponential improvement over \cite{BadoiuSOCG2006}.

\section{The two algorithms}\label{sect:two_methods}
Before we present our algorithms, we define the problem more formally and introduce
some notation.
Let $\tree$ be a rooted tree. We say that $\tree$ is \emph{properly weighted} if
each node $\node\in\tree$ has a positive weight $\weight(\node)$ that
equals the sum of the weights of the children of~$\node$.
We assume without loss of generality that $\weight(\myroot(\tree))=1$.
A \emph{polygonal partition} for a properly weighted tree assigns
a convex polygon $P(\node)$ to each node $\node\in\tree$ such that
\begin{itemize}
\item the polygon $P(\myroot(\tree))$ is the unit square;
\item for any node $\node$ we have $\area(P(\node)) = \weight(\node)$;
\item for any node $\node$, the polygons assigned to the children of $\node$ form a disjoint partition of $P(\node)$.
\end{itemize}
Recall that the aspect ratio of a planar convex region $A$, denoted
by $\ar(A)$,  is defined as $\ar(A) := \diam(A)^2/\area(A)$.
The aspect ratio of a polygonal partition is the maximum aspect ratio of any of the
polygons in the partition. Our goal is to show that any properly weighted tree
admits a \emph{fat polygon partition}, that is, a polygonal partition with small aspect ratio.
\medskip

We propose two methods for constructing fat polygonal partitions.
They both start with transforming the input tree~$\tree$ into a \emph{binary} tree~$\tree'$.
The nodes of $\tree$ are a subset of nodes of $\tree'$, and
two nodes are in the ancestor-descendant relation in $\tree$
if and only if they are in the same relation in $\tree'$.
The weights assigned to nodes of $\tree$ are preserved in $\tree'$.
Any polygonal partition for $\tree'$ restricted to nodes from $\tree$
is a polygonal partition for $\tree$.
Then, for the binary tree $\tree'$, it suffices to design a method
that cuts the polygon $P(\node)$ corresponding to a node $\node$ into two polygons
of prespecified areas that correspond to $\node$'s children. We propose two such methods:
the \emph{angular} method and the \emph{greedy} method. To achieve a polygonal
partition for $\tree'$, it suffices to recursively apply one of the
cutting methods.

\paragraph{The transformation into a binary tree.}
We transform the input $n$-node tree $\tree$ into a binary tree $\tree'$
by replacing every internal node $\node$ of degree greater than two
by a collection of nodes whose subtrees together are exactly
the subtrees of~$\node$. This can be done
in such a way that $\mydepth(\tree')=O(\mydepth(\tree)+\log n)$~\cite{ps-cgi-85}.
The number of nodes in $\tree'$ is $O(n)$.
For completeness we sketch how this transformation is done.

For a node $\node$, we use $\tree_\node$ to denote the subtree rooted at $\node$,
and $|\tree_\node|$ to denote the number of nodes in~$\tree_\node$.
The transformation is a recursive process, starting at the root of~$\tree$.
Suppose we reach a node~$\node$. If $\node$ has degree two or less, we just recurse
on the at most two children of~$\node$. If $\node$ has degree $k\geq 3$, we proceed
as follows. Let $C(\node)$ be the set of children of~$\node$, and let $\othernode\in C(\node)$
be the child with the largest number of nodes in its subtree. We partition
$C(\node)\setminus \{\othernode\}$ into two non-empty subsets $C_1(\node)$ and $C_2(\node)$ such that
$\sum_{\othernode\in C_1(\node)} |\tree_\othernode| < |\tree_\node|/2$ and
$\sum_{\othernode\in C_2(\node)} |\tree_\othernode| < |\tree_\node|/2$.
We create three new nodes $\node_1,\node_2,\node_3$ and modify the tree
as shown in Figure~\ref{fi:transform-to-binary}. The weights
$\weight(\node_1),\weight(\node_2),\weight(\node_3)$ are set to the sum of the
weights of the leaves in their respective subtrees.
Finally, we recurse on $\node_1$, $\othernode$, and~$\node_3$.
\begin{figure}[t]
\begin{center}
\includegraphics{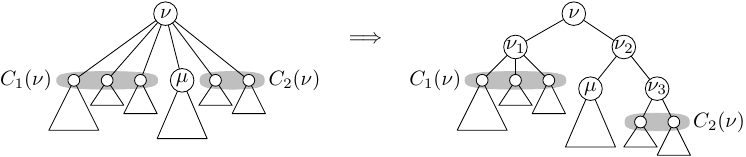}
\caption{Transforming a tree to a binary tree.}
\label{fi:transform-to-binary}
\end{center}
\end{figure}
After the procedure has finished we have a (properly weighted)
tree in which every node has degree at most two.
We remove all degree-1 nodes to obtain our binary tree~$\tree'$.
The height of $\tree'$ is at most $2(\mydepth(\tree)+\log n)$,
because every time we go down two levels in $\tree'$ we either pass
through an original node from $\tree$ or the number of nodes in the
subtree halves.

\begin{figure}[t]
\begin{center}
\includegraphics[width=3.5in]{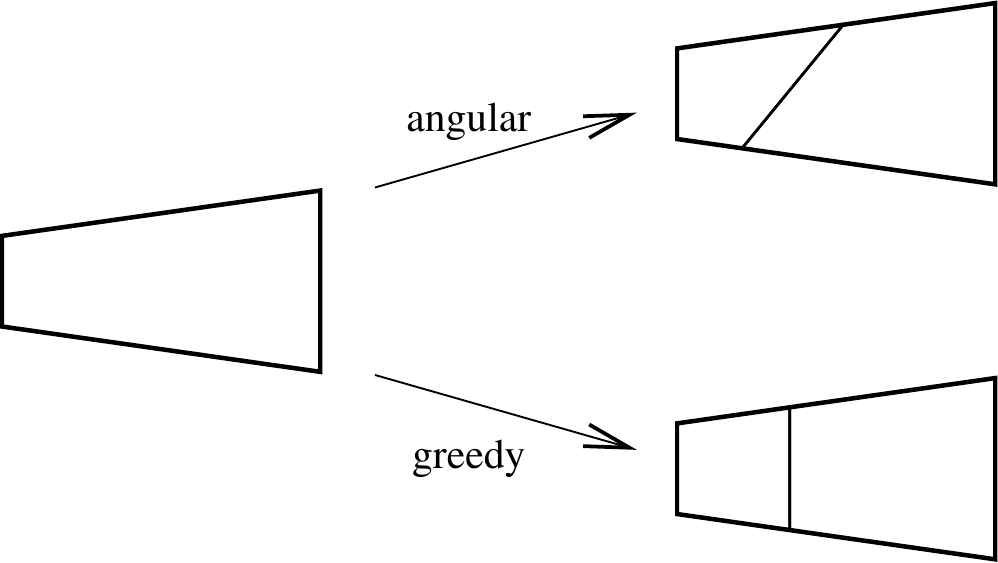}
\end{center}
\caption{Sample executions of our cutting methods.}
\label{fig:sample_cuts}
\end{figure}

\paragraph{Methods for cutting a polygon.}
Suppose we have to cut a convex polygon $P(\node)$ into two subpolygons.
Note that if we fix an orientation for the cut,
then there are only two choices left for the cut because the areas
of the subpolygons are prespecified. (The two choices correspond to having
the smaller of the two areas to the left or to the right of the cut.)
Our cutting methods, depicted in Figure~\ref{fig:sample_cuts} are the following.
\begin{itemize}
\item \emph{Angular}:
Let $c$ denote the cut, that is, the line segment separating the two subpolygons of $P(\node)$.
We select the orientation of $c$ such that we maximize
$$\min\{\myangle(c,e):\hbox{$e$ is an edge of the input polygon}\},$$
where $\myangle(c,e)$ is the smaller of the angles between the lines
$\ell(c)$ and $\ell(e)$ containing $c$ and $e$, respectively.
In other words, we cut in a direction as different as possible
from all the orientations determined by the edges of the polygon.
We then take any of the two cuts of the selected orientation.

\item \emph{Greedy}: The greedy method selects the cut that minimizes
the maximum of the aspect ratios of the two subpolygons.
\end{itemize}

\paragraph{Experiments.}
To get an idea of the relative performance of the two methods
we implemented them and performed some experiments. Table~\ref{table:experiments}
shows the results on two hierarchies.
One is synthetic and was generated using a random process,
the other is the home directory with all subfolders of one of the authors.
Leaves in the latter hierarchy are the files in any of the folders, and the weight
of a leaf is the size of the corresponding file.
For comparison, we also ran two additional partitioning methods.
Both of these methods use the transformation of the input into
a binary tree. The \emph{random} method always makes a cut in a random direction.
In the \emph{greedy rectangular} method, all polygons are rectangles and
all cuts are parallel to the sides of the original rectangle.
The method always greedily chooses the cut that is perpendicular
to the longer side, which maximizes the aspect ratio of the two subpolygons resulting
from the cut.

\ifCtableInstalled
  \begin{table}
  \begin{center}
  \begin{tabular}{!{\vrule width 2pt}c!{\vrule width 2pt}c|c|c|c!{\vrule width 2pt}}
  \noalign{\hrule height 2pt}
  \multirow{2}*{\bf Method} & \multicolumn{2}{c|}{\bf Synthetic Data} &  \multicolumn{2}{c!{\vrule width 2pt}}{\bf Home Folder}\\
  \cline{2-5}
  & \bf Average & \bf Maximum &  \bf Average & \bf Maximum\\
  \noalign{\hrule height 2pt}
  Angular & 3.79 & 14.28 & 3.87 & 20.11\\
  \hline
  Greedy & 2.55 & 6.79 & 2.53 & 10.42\\
  \hline
  Random & 5.75 & 700.41 & 5.99 & 1609.66\\
  \hline
  Greedy Rectangular& 4.25 & 8351.42 & 110.75 & 622590.66\\
  \noalign{\hrule height 2pt}
  \end{tabular}
  \end{center}
  \caption{Aspect ratios of partitions generated by various methods}
  \label{table:experiments}
  \end{table}
  }
\else
  \begin{table}
  \begin{center}
  \begin{tabular}{|c|c|c|c|c|}
  \hline
  {\bf Method} & \multicolumn{2}{c|}{\bf Synthetic Data} &  \multicolumn{2}{c|}{\bf Home Folder}\\
  \cline{2-5}
  & \bf Average & \bf Maximum &  \bf Average & \bf Maximum\\
  \hline
  Angular & 3.79 & 13.19 & 3.87 & 20.11\\
  \hline
  Greedy & 2.56 & 6.79 & 2.57 & 8.39\\
  \hline
  Random & 5.79 & 355.14 & 6.26 & 1609.66\\
  \hline
  Greedy Rectangular& 3.52 & 1445.99 & 24.49 & 230308.30\\
  \hline
  \end{tabular}
  \end{center}
  \caption{Aspect ratios of partitions generated by various methods}
  \label{table:experiments}
  \end{table}
\fi

In all our tests, the methods partitioned a square.
The greedy method performs best, closely followed by the angular method.
Interestingly, the greedy rectangular method performs even worse
than the random method---apparently restricting to rectangles
is a very bad idea as far as aspect ratio is concerned.

In the next two sections we prove that both the angular method and the
greedy method construct a partition in which the aspect ratios are
$O(\poly(\mydepth{\tree} + \log n))$. For the angular method, the proof is simpler and gives
a better bound on the worst-case aspect ratio. We present the more complicated proof
for the greedy method because it is the most natural method and it has the
best performance in practice.

\section{Analysis of angular partitions}
\label{sect:angular_analysis}
The idea behind the angular partitioning method is that a polygon with large aspect
ratio must have two edges that are almost parallel. Hence, if we avoid using partition
lines whose orientations are too close to each other, then we can control the
aspect ratio of our subpolygons.
Next we make this idea precise.

Let $U$ be the initial unit square that we partition, and let $\phi>0$ be a parameter.
Recall that for two line segments $e$ and $e'$, we use $\myangle(e,e')$ to denote
the smaller angle defined by the lines $\ell(e)$ and $\ell(e')$
containing $e$ and $e'$, respectively.
We define a convex polygon $P\subset U$ to be a
\emph{$\phi$-separated polygon} if it satisfies the following condition.
For any two distinct edges $e$ and $e'$, we have:
\begin{enumerate}
\item[(i)] $\myangle(e,e') \geq \phi$; or
\item[(ii)] $e$ is contained in $U$'s top edge and $e'$
           is contained in $U$'s bottom edge (or vice versa); or
\item[(iii)] $e$ is contained in $U$'s left edge and $e'$
           is contained in $U$'s right edge (or vice versa).
\end{enumerate}

\begin{lemma}
\label{le:aspect-ratio}
The aspect ratio of a $\phi$-separated polygon $P$ is $O(1/\phi)$.
\end{lemma}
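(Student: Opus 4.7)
The aim is to prove $\area(P) \geq c\phi D^2$ with $D = \diam(P)$, so that $\ar(P) = D^2/\area(P) = O(1/\phi)$. Pick vertices $u,v \in P$ with $|uv|=D$ and rotate coordinates so that $uv$ is horizontal. Since $|uv|$ is the diameter, $u$ is strictly leftmost and $v$ strictly rightmost in $P$. Writing $h,h'$ for the maximum distances of $P$ above and below the line through $uv$, the triangles $\triangle uvp^+$ and $\triangle uvp^-$ (where $p^\pm$ are the extremal vertices) lie in $P$ and give $\area(P) \geq D(h+h')/2$, so it suffices to show $h + h' = \Omega(\phi D)$.

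Traverse the upper boundary of $P$ from $u$ to $v$; being a convex polyline, its successive signed edge angles $\beta^U_1 > \beta^U_2 > \cdots$ lie in $(-\pi/2, \pi/2)$, and by $\phi$-separation consecutive angles differ by at least $\phi$ (two consecutive edges share a vertex, so the top/bottom and left/right exceptions cannot apply to them). The crucial structural consequence I would extract is that \emph{at most one upper edge is flat} (meaning $|\beta^U_i| < \phi/2$): two flat edges would lie in an interval of length $\phi$, contradicting the $\phi$-gap. The lower boundary has the analogous property.

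The proof then splits into cases on the upper boundary. If every upper edge is \emph{steep} ($|\beta^U_i| \geq \phi/2$), then
\[
2h \;=\; \sum_i \ell_i |\sin\beta^U_i| \;\geq\; \sin(\phi/2)\sum_i \ell_i \;\geq\; \sin(\phi/2)\cdot|uv| \;=\; \Omega(\phi D),
\]
where I use the triangle inequality $\sum_i \ell_i \geq |uv|$ for the polyline length. If the upper boundary has a single flat edge of length $\ell^* \leq D/2$ with endpoints $p_L, p_R$, both sub-polylines on either side of it consist entirely of steep edges, and the inequality $|\sin\beta|\geq \tan(\phi/2)\cos\beta$ for each steep edge yields $y_L + y_R \geq (D-\ell^*)\tan(\phi/2) \geq (D/2)\tan(\phi/2)$; since the peak of the upper boundary must sit at one endpoint of the single flat edge, $h = \max(y_L, y_R) = \Omega(\phi D)$.

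The main obstacle is the remaining case, a flat upper edge of length $> D/2$. I would then run the symmetric analysis on the lower boundary; if it has no long flat edge, the previous cases yield $h' = \Omega(\phi D)$. Otherwise both boundaries contain long flat edges, both nearly parallel to $uv$ and hence to each other, so $\phi$-separation forces them to be an \emph{exception pair}---both lying on a pair of opposite sides of the unit square $U$ (top/bottom or left/right). Then $P$ contains the convex hull of two parallel segments of length $>D/2$ at unit distance apart, which is a trapezoid of area $>D/2$, yielding $\area(P) \geq D/2 \geq D^2/(2\sqrt 2)$---a strictly stronger bound than needed.
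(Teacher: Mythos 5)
Your proof is correct, and while it shares the paper's skeleton---fix a diameter chord $uv$, bound $\area(P)$ from below by $\diam(P)\cdot(h+h')/2$, and show $h+h'=\Omega(\phi\cdot\diam(P))$---the mechanism for the height bound is genuinely different. The paper works locally: it takes the two vertices $r,s$ of $P$ extremal in the direction orthogonal to $uv$ and, using the pairwise angle conditions on only the four edges incident to $r$ and $s$, deduces that one of the triangles $uvr$, $uvs$ has area at least $(\diam(P)^2/4)\sin(\phi/2)$; the exceptional parallel case is then dispatched by a somewhat delicate contradiction argument about where $r$ and $s$ can sit on opposite sides of $U$. You instead argue globally along each convex chain: monotonicity of the edge directions together with the $\phi$-gap between \emph{consecutive} edges (to which exceptions (ii)--(iii) can never apply, since consecutive edges share a vertex while opposite sides of $U$ are disjoint) gives the clean structural fact that each chain has at most one ``flat'' edge, after which summing $\ell_i|\sin\beta_i|$ yields the height bound directly. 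Your handling of the exceptional case is also tidier: two long flat edges on opposite chains make an angle below $\phi$ with each other, hence must be an exception pair, and the resulting trapezoid of height $1$ gives $\area(P)>\diam(P)/2$ outright. The price is a slightly longer case analysis (short versus long flat edge, plus the degenerate chain when $uv$ is itself an edge of $P$, which you should note is subsumed by the ``long flat edge'' case); the payoff is that every angle condition you actually invoke is between edges sharing a vertex, so the exceptions enter only through the final, easy trapezoid case.
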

\begin{proof}
Let $d := \diam(P)$ and let
$uv$ be a diagonal of $P$ that has length~$d$.
Consider the bounding box~$B$ of~$P$ that has two
edges parallel to $uv$. We call the edge of $B$ parallel to and above $uv$ its
\emph{top edge}, and the edge of $B$ parallel to and below $uv$ its \emph{bottom edge}.
Let $r$ be a vertex of $P$ on the top edge of $B$ and let $s$ be a vertex on its bottom
edge---see Fig.~\ref{fi:aspect-ratio}.
\begin{figure}[ht]
\begin{center}
\includegraphics[scale=0.8]{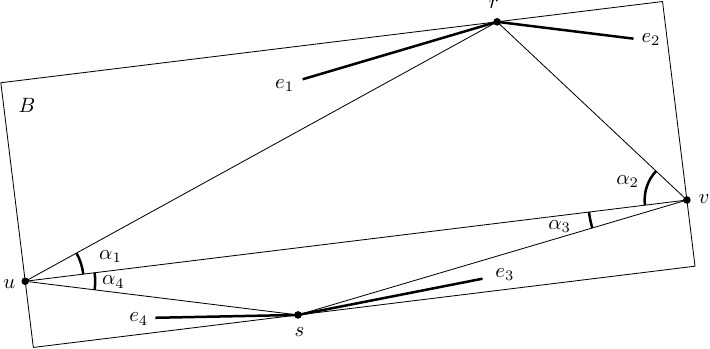}
\caption{Illustration for the proof of Lemma~\ref{le:aspect-ratio}.}
\label{fi:aspect-ratio}
\end{center}
\end{figure}
Let $e_1$ and $e_2$ be the edges of $P$ incident to $r$, and let $e_3$ and $e_4$
be the edges incident to $s$. Let the angles $\alpha_1,\ldots,\alpha_4$ be defined as in Fig.~\ref{fi:aspect-ratio}.
We distinguish two cases.
\begin{itemize}
\item \emph{Case (a): none of $e_1,e_2,e_3,e_4$ are parallel}. \\
  By condition~(i), this implies that the angles any two edges make is at least~$\phi$.
  Hence, we have
  \begin{align}
  \phi &\leq \myangle(e_1,e_3) \leq \max(\alpha_1,\alpha_3), \label{eq1} \\
  \phi &\leq \myangle(e_2,e_4) \leq \max(\alpha_2,\alpha_4), \label{eq2} \\
  \phi &\leq \myangle(e_1,e_4) \leq \alpha_1 + \alpha_4,     \label{eq3} \\
  \phi &\leq \myangle(e_2,e_3) \leq \alpha_2 + \alpha_3.     \label{eq4}
  \end{align}
  By (\ref{eq3}) we have $\alpha_1+\alpha_4 \geq \phi$.
  Now assume without loss of generality that $\alpha_1 \geq \phi/2$.
  If $\alpha_2 \geq \phi/2$ as well, then
  \[
  \area(uvr) \geq (d^2/4)\cdot \sin (\phi/2).
  \]
  Since $uvr\subset P$, this implies that
  \[
  \ar(P) \leq \frac{d^2}{(d^2/4)\cdot \sin (\phi/2)} =  O(1/\phi).
  \]
  If $\alpha_2< \phi/2$, then we use (\ref{eq2}) and (\ref{eq4})
  to conclude that $\alpha_4 \geq \phi$ and $\alpha_3 \geq \phi/2$.
  Hence, we now have $\area(uvs) \geq (d^2/4)\cdot \sin (\phi/2)$,
  which implies that $\ar(P)=O(1/\phi)$.
\item \emph{Case (b): some edges in $e_1,e_2,e_3,e_4$ are parallel}. \\
  By conditions (i)--(iii), two edges of $P$ can be parallel
  only if they are contained in opposite edges of $U$.
  Hence, $|rs| \geq 1$. Moreover, since $uv$ defines the diameter we have $|uv|\geq |rs|$.
  Let $\alpha:= \myangle(uv,rs)$, as illustrated in Fig.~\ref{fi:aspect-ratio-parallel}.
  \begin{figure}[ht]
  \begin{center}
  \includegraphics[scale=0.8]{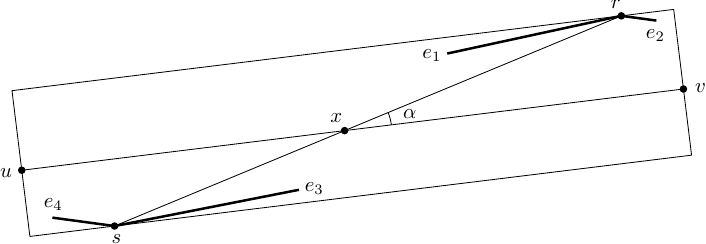}
  \caption{The case of parallel edges.}
  \label{fi:aspect-ratio-parallel}
  \end{center}
  \end{figure}
  If $\alpha\geq \min\{\phi,\pi/4\}$, this is easily
  seen to imply that $\area(P) \geq \area(urvs) = \Omega(\phi)$,
  which means that $\ar(P)=O(1/\phi)$. Now consider the
  case where $\alpha<\phi$ and $\alpha < \pi/4$. We show that this leads to a contradiction.
  Let $x:=uv\cap rs$ and assume without loss of generality that,
  as in Figure~\ref{fi:aspect-ratio-parallel}, we have $\alpha=\angle rxv$.
  Then $\myangle(e_1,e_3) \leq \alpha<\phi$. By conditions~(i)--(iii)
  this can only happen if $e_1$ and $e_3$ are contained in opposite edges of~$U$,
  if we assume that $r$ and $s$ are chosen such that they maximize the distances
  between $r$ and $s$, which can be done without loss of generality.
  We have $\myangle(e_3,rs) \le \myangle(uv,rs) = \alpha \le \phi < \pi/4$.
  However, it is impossible to place two points $r$ and $s$ on opposite sides of $U$
  such that $\myangle(e_3,rs) < \pi/4$. The smallest angle one can obtain is $\pi/4$.
\end{itemize}
\end{proof}

To construct a polygonal partition we use the procedure described in Section~\ref{sect:two_methods}.
Thus, we first transform the input tree $\tree$ into a corresponding binary tree $\tree'$.
Next, we recursively apply the angular cutting method to $\tree'$,
that is, at each node $\node$, we cut the polygon $P(\node)$, using a cut $c$
that maximizes the minimum angle $c$ makes with any of the edges of $P(\node)$.

\begin{lemma}
Let $P(\node)$ be the subpolygon generated by the algorithm above
for a node $\node$ at level $k$ in $\tree'$.
Then $P(\node)$ is a $(\pi/(2k+6))$-separated polygon.
\end{lemma}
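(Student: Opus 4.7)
The plan is to proceed by induction on the level $k$ of $\node$ in $\tree'$, proving alongside the main statement an auxiliary claim: \emph{$P(\node)$ at level $k$ has at most $k+2$ distinct edge orientations mod $\pi$.} This auxiliary count is what controls how large an angle the angular method can guarantee for the next cut, so the two inductive statements must be carried together.

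For the base case $k = 0$, the polygon is the unit square $U$, which has exactly $2 = 0 + 2$ orientations (horizontal and vertical), establishing the auxiliary claim. Any two non-opposite edges of $U$ are perpendicular and thus satisfy (i) with angle $\pi/2 \geq \pi/6 = \pi/(2\cdot 0 + 6)$, while the two pairs of opposite edges are exempted by conditions (ii) and (iii); so $U$ is $(\pi/6)$-separated.

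For the inductive step, suppose $P(\othernode)$ at level $k$ is $(\pi/(2k+6))$-separated and has at most $k+2$ edge orientations. Let $c$ be the cut selected by the angular method when splitting $P(\othernode)$ into its two children, one of which is $P(\node)$ at level $k+1$. Viewing the at most $k+2$ orientations as points on the circle $[0,\pi)/\!\!\sim$, pigeonhole produces a gap of length at least $\pi/(k+2)$; its midpoint is at angular distance at least $\pi/(2(k+2)) = \pi/(2k+4)$ from every existing orientation, so by its maximin rule the angular method produces $c$ with $\myangle(c,e)\geq \pi/(2k+4)$ for every edge $e$ of $P(\othernode)$. Hence $P(\node)$ contains at most $(k+2)+1 = (k+1)+2$ orientations, confirming the auxiliary claim. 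For separation with $\phi := \pi/(2(k+1)+6) = \pi/(2k+8)$: (a) a pair of edges inherited from $P(\othernode)$ satisfies (i), (ii), or (iii) with $\pi/(2k+6) \geq \pi/(2k+8)$, because truncating an edge preserves its orientation and keeps it on the same side of $U$; (b) a pair involving $c$ and an inherited edge satisfies (i) via $\pi/(2k+4) \geq \pi/(2k+8)$. Hence $P(\node)$ is $(\pi/(2k+8))$-separated.

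The main obstacle is the orientation count; everything else is bookkeeping. A minor point to verify is that the separation definition makes sense for the subpolygons in the first place, i.e., that two distinct edges of any $P(\node)$ can be parallel only if they lie on opposite sides of $U$. This is automatic from convexity: two non-adjacent collinear edges of a convex polygon would force the segment between them into the boundary, collapsing them into a single edge; so the only way parallel edges can survive is as truncations of two opposite sides of the outer square, which is precisely what conditions (ii) and (iii) allow.
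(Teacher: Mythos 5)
Your proof is correct and follows essentially the same route as the paper: induction on the level combined with a pigeonhole argument on the directions of the edges of the parent polygon. The only difference is cosmetic bookkeeping --- the paper pigeonholes on the at most $k+3$ edges of the parent (giving a gap of $\pi/(k+3)$ and hence a cut at angle $\geq \pi/(2k+6)$ from every edge), whereas you track the at most $k+2$ distinct orientations via an auxiliary invariant, which yields a marginally larger guaranteed angle but requires carrying the extra claim; both counts suffice. One small caution: your closing aside justifying that parallel edges can only come from opposite sides of $U$ ``automatically from convexity'' is not right as stated --- convexity only rules out distinct \emph{collinear} edges, not parallel ones (a regular hexagon is convex and has three parallel pairs); the real reason is your own inductive argument, namely that inherited parallel pairs are already exempted by (ii)/(iii) in the parent and the new cut makes a strictly positive angle with every inherited edge, so the aside is both unnecessary and misleading and should be dropped.
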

\begin{proof}
The proof is by induction on $k$.

For $k=0$, we have $\node=\myroot(\tree')$ and $P(\node)$ is a unit square.
Hence, $P(\node)$ is $(\pi/2)$-separated and therefore also $(\pi/6)$-separated.

For $k>0$ we argue as follows. By the induction hypothesis, the polygon $P(\othernode)$
corresponding to the parent $\othernode$ of $\node$ is $(\pi/(2k+4))$-separated. Moreover,
by construction it has at most $4+(k-1)=k+3$ edges.
Consider the sorted (circular) sequence of angles that these edges make with the $x$-axis.
By the pigeon-hole principle, there must be two adjacent angles that are at least
$\pi/(k+3)$ apart. Hence, the cut $c$ that is chosen to partition $P(\othernode)$
makes an angle at least $\pi/(2k+6)$ with all edges of $P(\othernode)$,
and by the induction hypothesis, all the other angles
are at least $\pi/(2k+4)$.
\end{proof}
Recall that the height of the binary tree $\tree'$ is $O(\mydepth(\tree) +\log n)$.
Hence, we get the following theorem.
\begin{theorem}
Let $\tree$ be a properly weighted tree with $n$ nodes. Then the angular partitioning method
constructs a polygonal partition for $\tree$ whose aspect ratio is $O(\mydepth(\tree)+\log n)$.
\end{theorem}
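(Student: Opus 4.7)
The plan is to combine the two lemmas already established with the depth bound on the transformed binary tree. Specifically, I would first invoke the fact that the preprocessing step turns the input tree $\tree$ into a binary tree $\tree'$ of depth $d' = O(\mydepth(\tree) + \log n)$, and that a polygonal partition for $\tree'$ restricted to the original nodes of $\tree$ yields a valid polygonal partition for $\tree$. So it suffices to bound the aspect ratio of every polygon $P(\node)$ produced by the recursive angular cutting method on $\tree'$.

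Next, for an arbitrary node $\node$ of $\tree'$ at some level $k \leq d'$, I would apply the second lemma to conclude that $P(\node)$ is $(\pi/(2k+6))$-separated, and then the first lemma (on the aspect ratio of $\phi$-separated polygons) to obtain
\[
\ar(P(\node)) \;=\; O\!\left(\frac{1}{\pi/(2k+6)}\right) \;=\; O(k) \;=\; O(d').
\]
Taking the maximum over all nodes in $\tree'$ yields the uniform bound $O(d') = O(\mydepth(\tree)+\log n)$ on the aspect ratio of the polygonal partition, which transfers directly to the polygonal partition for $\tree$.

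There is no real obstacle here beyond bookkeeping, since all the substantive work lies in the two preceding lemmas and the standard balancing transformation. The only thing worth emphasizing in the writeup is that the bound applies to \emph{every} polygon assigned by the algorithm, not merely the deepest ones, because the depth monotonicity makes level $d'$ the worst case; and that restricting the partition of $\tree'$ to nodes of $\tree$ does not increase any aspect ratio.
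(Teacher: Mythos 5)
Your proposal is correct and follows exactly the paper's route: the theorem is obtained by combining the depth bound $\mydepth(\tree')=O(\mydepth(\tree)+\log n)$ with the two lemmas (every level-$k$ polygon is $(\pi/(2k+6))$-separated, and a $\phi$-separated polygon has aspect ratio $O(1/\phi)$). The paper states this derivation in a single sentence, so your writeup is, if anything, more explicit about the bookkeeping.
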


\section{Analysis of greedy partitions}
\label{sect:greedy_analysis}
We now turn our attention to the greedy method, which at each step
chooses a cut $c$ that minimizes the maximum aspect ratio of the
two subpolygons resulting from the cut.
The main component of our proof that polygonal partitions
with good properties exist will be the following lemma. It shows that there is always a way to cut
a polygon into two smaller polygons of required areas so that
the aspect ratios of the new subpolygons are bounded. 

Note that the lemma requires that the number of vertices in the polygon be bounded.
If the number of vertices is unbounded, the polygon may become arbitrarily close to a circle. In this case there is no good cut if one
of the resulting polygons has to be much smaller than the other.

The proof of the lemma is long and consists of a case analysis. An impatient reader may prefer to omit the proof and move directly to Theorem~\ref{theorem:plane_partition}.

\begin{lemma}[Good cuts]\label{lemma:circular_cut}%
Let $P\subset \mathbb{R}^2$ be a convex polygon with $k$ vertices, and let $a\in (0,1/2]$.
Then $P$ can be partitioned into two convex polygons $P_1$ and $P_2$ such that
\begin{itemize}
\item
  Each of the $P_1$ and $P_2$ has at most $k+1$ vertices.
\item
  $\area(P_1) = a \cdot \area(P)$, and
  $\area(P_2) = (1-a) \cdot \area(P)$.
\item
  $
  \max\{\ar(P_1), \ar(P_2)\} \leq \max\left\{\ar(P)\left(1+\frac{6}{k}\right), k^8\right\}
  $.
\end{itemize}
\end{lemma}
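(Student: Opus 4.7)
The first two bullets are easy to address up front. Any chord cut (a straight segment with both endpoints on $\bd P$) crosses at most two edges of $P$, so the subpolygons $P_1,P_2$ inherit some $k_1$ and $k-k_1$ original vertices respectively, plus the (at most two) chord endpoints; in the worst case each has $k_1+2\le k+1$ vertices, and this matches the requirement. The prescribed area split is produced by continuity: for any fixed chord orientation $\theta$, sliding a line of that orientation across $P$ makes the area on one side grow continuously from $0$ to $\area(P)$, so some position achieves area exactly $a\cdot\area(P)$. Everything therefore reduces to choosing $\theta$ (and the corresponding cut position) for which the aspect-ratio bound holds.

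The plan for the aspect-ratio bound is to split on the magnitude of $\ar(P)$, matching the two terms of the $\max$. In the \emph{fat regime}, say $\ar(P)\le k^{8}/(1+6/k)$, the target bound $k^{8}$ provides a factor of roughly $k$ of slack, so almost any reasonable cut should work. I would rotate so that the diameter $uv$ of $P$ is horizontal, use the bounding box of side lengths $d=\diam(P)$ and $h\ge \area(P)/d$ (the second inequality coming from the two-triangle lower bound $\area(P)\ge dh/2$ already exploited in Lemma~\ref{le:aspect-ratio}), and cut by a vertical line. Each subpolygon has diameter at most $\sqrt{d^{2}+h^{2}}\le 2d$ and area $\ge a\cdot\area(P)$, which together with $a\ge 1/\mathrm{poly}(k)$ handled separately (the extreme case $a$ very small is best treated by placing the cut so that the smaller piece has length $\Theta(\sqrt{a\,\area(P)})$ along the diameter) gives aspect ratio bounded by a polynomial in $k$, in fact comfortably below $k^{8}$.

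The real work is in the \emph{thin regime}, $\ar(P)>k^{8}/(1+6/k)$, where $d^{2}\gg \area(P)$ and we are only allowed the multiplicative degradation $(1+6/k)$. My plan here is to exploit the polygonality of $P$: with the diameter laid along the $x$-axis, $P$ is sandwiched between an upper and a lower convex chain, each with at most $k$ edges and total vertical travel at most $2h$. Most of these edges must therefore be nearly horizontal. I would consider cuts whose orientation is one of the $k$ edge normals and use a pigeonhole argument to find a direction that (i) makes a large angle with the diameter, yet (ii) is nearly perpendicular to the two chain edges that the chord actually crosses. Such a cut splits $P$ into two subpolygons whose diameters are bounded by (length along the diameter of the piece) $+$ a local-width correction, and the slack $(1+6/k)$ is chosen precisely so that the $O(1/k)$ correction coming from the polygonal (rather than smooth) boundary can be absorbed.

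\textbf{Main obstacle.} The delicate point, and where I expect the bulk of the case analysis to live, is the thin regime when $a$ is small. Here the smaller piece $P_1$ must be cut close to an end of $P$ along the diameter; its diameter is then on the order of its own length along the $x$-axis plus the vertical width of $P$ at the cut location, while its area is only $a\cdot\area(P)$. Keeping $\ar(P_1)\le (1+6/k)\,\ar(P)$ requires a carefully chosen cut orientation — not simply perpendicular to the diameter — so that the vertical width of $P$ at the cut is as small as possible and the two chord endpoints lie on edges that are both close to parallel to the diameter. I expect the proof to isolate a handful of sub-cases (e.g.\ according to whether $a$ is polynomially small in $k$, whether the diameter's endpoints lie on the same chain or on opposite chains, and whether the bounding-box height is achieved near the middle or near an endpoint of the diameter) and to produce an explicit cut in each sub-case, with the $1+6/k$ factor emerging from summing $O(1/k)$ contributions over the at most $k$ edges of $P$.
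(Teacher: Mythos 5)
Your handling of the vertex count and the area constraint is fine, and your top-level case split (fat vs.\ thin, large vs.\ small $a$) mirrors the paper's. For moderate $a$ your plan is essentially the paper's: when $\ar(P)\le k^6$ and $a>1/k^2$ \emph{any} cut gives $\ar(P_1)\le k^2\ar(P)\le k^8$, and when $\ar(P)>k^6$ the paper simply cuts perpendicular to the diameter $v_1v_2$ --- no pigeonhole over edge normals is needed, because $\ar(P)>k^6$ forces the width of $P$ to be at most $\frac{2}{k^6}\diam(P)$ everywhere, which is exactly the ``local-width correction'' you want; the only nontrivial choice is \emph{which end} to cut from (the end with $\area(Q_i)/s_i\le\area(P)/\diam(P)$, which guarantees $a\cdot\diam(P)\le s_1\le\sqrt{a}\cdot\diam(P)$ and hence the $(1+O(1/k))$ degradation). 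The $1+6/k$ does not arise from summing $O(1/k)$ contributions over edges.

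The genuine gap is the case you yourself flag as the main obstacle: small $a$. Your parenthetical remedy --- place the cut near an end of the diameter so the piece has length $\Theta(\sqrt{a\,\area(P)})$ along it, with a cleverly tilted orientation --- does not work in general and is not what the paper does. If $a$ is, say, $2^{-k^{10}}$, then $s_1\le\sqrt{a}\,\diam(P)$ is far smaller than the width of $P$ near that end (which can be as large as $\Theta(\diam(P)/k^6)$ arbitrarily close to $v_1$ if $v_1$ is not a vertex of sharp angle), so $\diam(P_1)$ is dominated by that width rather than by $s_1$, and $\ar(P_1)$ blows up by a factor $\Theta(1/(k^{12}a))$ rather than $1+O(1/k)$; no re-orientation of a cut anchored near the diameter's end repairs this, because the problem is the local shape of $P$ there, which you do not control. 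The paper's device for all $a\le 1/k^2$ is different: it anchors the small piece at the vertex $v$ of \emph{minimum interior angle} $\phi\le\pi(1-2/k)$ and cuts perpendicular to the angle bisector at $v$. The piece then sits inside a wedge of controlled opening, and a two-way subcase analysis (cut chord long or short relative to its distance $h$ from $v$) yields either $\ar(P_1)\le k^5$ outright or $\ar(P_1)\le\ar(P)(1+6/k)$ via a circular-sector area comparison. Without this idea (or an equivalent one) your proof does not go through in the regime that forces the $k^8$ term and the $6/k$ slack in the statement.
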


\begin{proof}
We distinguish two cases, depending on whether $a\leq 1/k^2$ (that is,
we are cutting off a relatively small subpolygon) or not.
\begin{description}
\item{Case 1:  $a\leq 1/k^2$.}
Let $\phi$ be the smallest angle of $P$, and let $v$ be a vertex of $P$ whose interior angle is~$\phi$.
Since $P$ has $k$ vertices, we have
\[
\phi \leq \pi \left(1-\frac{2}{k}\right).
\]
Let $\ell$ be the angular bisector at~$v$.
Consider the cut $c$ orthogonal to $\ell$ such that $\area(P_1)=a\cdot\area(P)$,
where $P_1$ is the subpolygon  induced by~$c$ having $v$ as a vertex---see
Figure~\ref{figure:case1-1} for an illustration.
Let $P_2$ be the other subpolygon.
Clearly, $P_1$ and $P_2$ are convex polygons of the required area
with at most $k+1$ vertices each.
Therefore, it remains to bound the aspect ratios of $P_1$ and $P_2$.

Since $P_2\subset P$, we have
\begin{align*}
\ar(P_2) &= \frac{\diam(P_2)^2}{\area(P_2)}
                \leq \frac{\diam(P)^2}{(1-a)\cdot \area(P)}
                =     \frac{\ar(P)}{1-a}
                < \ar(P) \left(1+2a\right) \\
          &< \ar(P) \left(1+\frac{2}{k^2}\right)
                < \ar(P) \left(1+\frac{1}{k}\right).
\end{align*}

We next bound $\ar(P_1)$.  Let $x_1,x_2$ be the two endpoints of the cut~$c$,
and let $t$ be the distance between $x_1$ and $x_2$.
Let $h$ be the distance from $v$ to $c$.
We distinguish between two subcases.
\begin{figure}
\begin{center}
\subfigure[Case 1.\label{figure:case1-1}]{
\scalebox{1}{\includegraphics{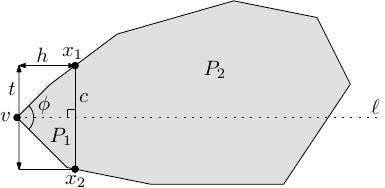}}
}
\hspace{1cm}
\subfigure[Case 1.2.\label{figure:case1-2}]{
\scalebox{1}{\includegraphics{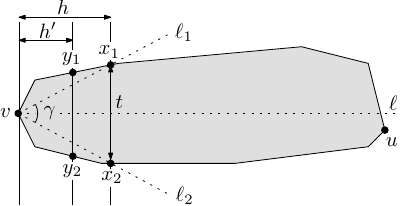}}
}
\caption{Partitioning $P$ into $P_1$ and $P_2$ when $a \leq 1/k^2$.}
\end{center}
\end{figure}
\begin{description}
\item{Case 1.1:  $t \geq h/k^2$.}
Since $P$ is convex, the triangle $vx_1x_2$ is contained in $P_1$.
Therefore,
\[
\area(P_1)\geq h\cdot t/2 \geq h^2/(2k^2).
\]
On the other hand, since $c$ is normal to the bisector of the angle of $v$, 
it follows that $P_1$ is contained inside a rectangle of width $h$ and height $H$, with
\begin{align*}
H &\leq 2\cdot h\cdot \tan(\phi/2)
   \leq 2\cdot h \cdot \tan\left(\frac{\pi(1-2/k)}{2}\right)\\
   &\leq 2\cdot h / \tan(\pi/k)
   \leq  2\cdot h \cdot k / \pi.
\end{align*}

Thus, $\diam(P_1) < h(1+2\cdot k/\pi)$.
It follows that
\[
\ar(P_1) = \frac{\diam(P_1)^2}{\area(P_1)} < \frac{(h+2\cdot h\cdot k/\pi)^2}{h^2/(2k^2)} < k^5.
\]
\item{Case 1.2:  $t < h/k^2$.}
Let $\ell_1$ be the line passing through $v$ and $x_1$,
and let $\ell_2$ be the line passing through $v$ and $x_2$.
Let $\gamma$ be the angle between $\ell_1$ and $\ell_2$.
Observe that $P_2$ is contained between $\ell_1$ and $\ell_2$.
Therefore, if $u$ is the point in $P_2$ farthest away from $v$ we have
\[
\frac{\gamma}{2\pi}\pi \dist{u}{v}^2 \geq \area(P_2),
\]
where $\dist{u}{v}$ denotes the length of the segment~$uv$.
It follows that
\[
\diam(P)^2 \geq \dist{u}{v}^2 \geq \frac{2}{\gamma} (1-a)\cdot\area(P).
\]
Therefore,
\[
\ar(P) = \frac{\diam(P)^2}{\area(P)} \geq \frac{2}{\gamma}\left(1-a\right)
\geq \frac{2}{\gamma}\left(1-\frac{1}{k^2}\right).
\]

We now give an upper bound on the diameter of $P_1$.
Assume without loss of generality that $\dist{v}{x_2} \geq \dist{v}{x_1}$.
Consider a segment $y_1 y_2$ parallel to $x_1 x_2$ with $y_1,y_2\in\bd P$
such that $y_1 y_2$ lies between $x_1 x_2$ and $v$---see Figure \ref{figure:case1-2}.
Let $h'$ be the distance between $v$ and $y_1 y_2$.
We first argue that $\dist{y_1}{y_2}\leq 2t$.

Assume for the sake of contradiction that $\dist{y_1}{y_2} > 2t$.
Let $g_1$ be the line passing through $y_1$ and $x_1$, and let $g_2$ be the line 
passing through $y_2$ and $x_2$.
Observe that since $\dist{y_1}{y_2} > \dist{x_1}{x_2}$, the lines $g_1$ and $g_2$ 
intersect in a point $w$ such that $P_2$ is contained in the triangle~$x_1 x_2 w$.
Furthermore, the polygon $v y_1 x_1 x_2 y_2$ is contained in $P_1$.
If $h'\geq h/2$, then the area of the triangle $v y_1 y_2$ is greater or 
equal to the area of the triangle $x_1 x_2 w$.  Therefore, $\area(P_1)\geq \area(P_2)$, 
contradicting the fact that $a\leq 1/k^2$.
If, on the other hand, $h'<h/2$, then the area of the quadrilateral $y_1 x_1 x_2 y_2$, 
is greater than the area of the triangle $x_1 x_2 w$, again implying that $\area(P_1)\geq \area(P_2)$, 
a contradiction. Therefore, we obtain that $\dist{y_1}{y_2} \leq 2t$.

It now follows that any point $q\in P_1$ is at distance at most $2t$ from the line segment $vx_2$.
Moreover, we have $t<h/k^2\leq \dist{v}{x_2}/k^2$.
Hence,
\[
\diam(P_1)
  = \max_{q,q'\in P_1}\dist{q}{q'}
  \leq 2t + \dist{v}{x_2} + 2t
  \leq \dist{v}{x_2}\left(1+\frac{4}{k^2}\right).
\]

Let $x^*$ be the point on the line segment $x_1x_2$ that is closest to $v$.
Since $\dist{v}{x_2}\geq h$, we have
\[
\area(P_1) \geq \frac{\gamma}{2\pi} \pi \dist{v}{x^*}^2
           \geq \frac{\gamma}{2} (\dist{v}{x_2}-t)^2
           \geq \frac{\gamma}{2} \dist{v}{x_2}^2 \left(1-\frac{1}{k^2}\right).
\]
Therefore,
\begin{align*}
\ar(P_1) &= \frac{\diam(P_1)^2}{\area(P_1)}
         \leq \frac{2}{\gamma} \cdot \frac{(1+4/k^2)^2}{1-1/k^2}
         \leq \ar(P) \frac{(1+4/k^2)^2}{(1-1/k^2)^2} \\
         &\leq \ar(P) \cdot (1+6/k^2)^2
         \leq \ar(P) \cdot (1+2/k)^2\\
         &\leq \ar(P) \cdot (1+6/k).
\end{align*}
\end{description}
\item{Case 2:  $a>1/k^2$.}
\begin{description}
\item{Case 2.1:  $\ar(P) \leq k^6$.}
In this case any cut giving the two subpolygons $P_1$ and $P_2$ the required
areas works. Indeed,
\[
\ar(P_1) = \frac{\diam(P_1)^2}{\area(P_1)}
         \leq \frac{\diam(P)^2}{a\cdot \area(P)}
         \leq k^2 \cdot \ar(P) \leq k^8
\]
and
\[
\ar(P_2) = \frac{\diam(P_2)^2}{\area(P_2)}
         \leq \frac{\diam(P)^2}{(1-a)\cdot \area(P)}
         \leq 2\cdot \ar(P)
         \leq 2 \cdot k^6
         < k^7.
\]

\item{Case 2.2:  $\ar(P) > k^6$.}
Pick points $v_1,v_2\in P$, such that $\dist{v_1}{v_2} = \diam(P)$.
For each $z\in [0,\diam(P)]$, let $\ell(z)$ be a line normal to $v_1v_2$
that is at distance $z$ from $v_1$ and intersects~$P$.
Note that $\ell(0)$ contains $v_1$ and $\ell(\diam(P))$ contains~$v_2$.
Define $f(z)$ to be the length of the intersection of $P$ with $\ell(z)$.
Observe that
\[
\area(P) = \int_{z=0}^{\diam(P)} f(z) dz
\]
Pick $s_1, s_2\in [0,\diam(P)]$, so that
\[
a\cdot \area(P) = \int_{z=0}^{s_1} f(z) dz = \int_{z=\diam(P)-s_2}^{\diam(P)} f(z) dz.
\]
Let $Q_1$ be the part of $P$ that is contained between $\ell(0)$ and $\ell(s_1)$.
Similarly, let $Q_2$ be the part of $P$ that is contained between $\ell(\diam(P)-s_2)$ and $\ell(\diam(P))$.
Clearly, both $Q_1$ and $Q_2$ are convex polygons with at most $k+1$ vertices.
\begin{figure}
\begin{center}
\scalebox{1}{\includegraphics{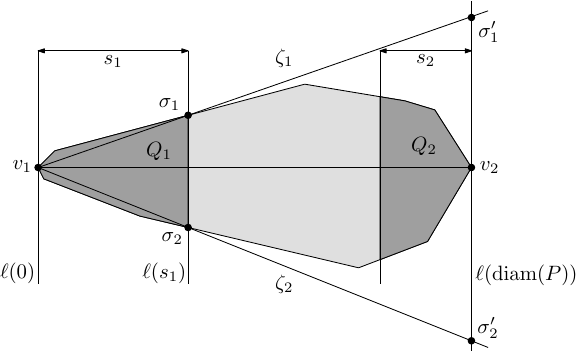}}
\caption{Partitioning $P$ into $P_1$ and $P_2$, when $\alpha > 1/k^2$:  Case~2.2.\label{figure:case2-2}}
\end{center}
\end{figure}

First, we will show that
\[
\min\left\{\frac{\area(Q_1)}{s_1}, \frac{\area(Q_2)}{s_2}\right\}\leq \frac{\area(P)}{\diam(P)}
\]
Assume for a contradiction that both
$\area(Q_1)/s_1$ and 
$\area(Q_2)/s_2$ are greater than
$\area(P)/\diam(P)$.
It follows that there exist $z_1\in [0,s_1]$ and $z_2\in [\diam(P)-s_2]$
such that $f(z_1) > \area(P)/\diam(P)$ and
$f(z_2) > \area(P)/\diam(P)$.
Since $P$ is convex, $f$ is a bitonic function.
Therefore, for each $z\in [z_1,z_2]$, we have $f(z) > \area(P)/\diam(P)$.
It follows that
\[
\area(P) = \area(Q_1) + \area(Q_2) + \area(P\setminus (Q_1\cup Q_2))
         > \frac{\area(P)}{\diam(P)} \cdot \diam(P),
\]
a contradiction.

We can therefore assume without loss of generality that
\[
\frac{\area(Q_1)}{s_1} \leq \frac{\area(P)}{\diam(P)}.
\]
Note that this implies
\[
s_1 \geq a\cdot \diam(P).
\]
We set $P_1=Q_1$, and $P_2 = P\setminus Q_1$.
It remains to bound $\ar(P_1)$ and $\ar(P_2)$.

By the convexity of $P$, we have
\[
\area(P) \geq \max_{z\in [0,\diam(P)]} f(z) \cdot \diam(P)/2.
\]
Since $\ar(P)>k^6$, it follows that
\[
\max_{z\in [0,\diam(P)]} f(z)
  \leq 2 \cdot \frac{\area(P)}{\diam(P)^2}\cdot \diam(P)
  < \frac{2}{k^6}\cdot \diam(P).
\]
This implies that $P$ is contained inside a rectangle with one edge of length $\diam(P)$ parallel to $v_1v_2$, and one edge of length $\frac{4}{k^6}\cdot \diam(P)$ normal to $v_1v_2$.
Thus,
\[
\diam(P_1) \leq s_1 + \frac{4}{k^6}\cdot \diam(P).
\]
Let $\sigma_1$, $\sigma_2$ be the two points where $\ell(s_1)$ intersects $\bd P$.
Let $\zeta_1$, $\zeta_2$, be the lines passing through $v_1$ and $\sigma_1$,
and $v_1$ and $\sigma_2$, respectively.
Let also $\sigma_1'$ and $\sigma_2'$ be the points where $\zeta_1$ and $\zeta_2$, respectively, intersect~$\ell(\diam(P))$---see Figure \ref{figure:case2-2}.

By the convexity of $P$ and $P_1$, we have
\[
\area(P_1) \geq \area(v_1 \sigma_1 \sigma_2)
            =   \left(\frac{s_1}{\diam(P)}\right)^2 \cdot \area(v_1 \sigma_1' \sigma_2')
           \geq \left(\frac{s_1}{\diam(P)}\right)^2 \cdot \area(P).
\]

Since $\area(P_1)= a \cdot \area(P)$, it follows that
$
s_1\leq \sqrt{a} \cdot \diam(P).
$
Using that $a>1/k^2$ we can now derive
\begin{align*}
\ar(P_1)
  &=  \frac{\diam(P_1)^2}{\area(P_1)}
   \leq \frac{(s_1 + 4\cdot \diam(P) / k^6)^2}{\area(P_1)}\\
  &\leq \frac{(\sqrt{a} \cdot \diam(P) + 4\cdot \diam(P) / k^6)^2}{a\cdot \area(P)}
  \\
  &< \frac{\diam(P)^2}{\area(P)} \cdot \left(1+\frac{4}{k^6 \sqrt{a}}\right)^2
  \leq  \ar(P) \cdot \left( 1+\frac{8}{k^4} + \frac{16}{k^{16}} \right)
  \\
  &\leq \ar(P) \cdot \left( 1+\frac{1}{k} \right).
\end{align*}

Since $f$ is bitonic, it follows that
\[
\min_{z\in [s_1,\diam(P)-s_2]}f(z) \geq \min \left\{ \max_{z\in [0,s_1]}f(z), \max_{z\in [\diam(P)-s_2,\diam(P)]}f(z) \right\}.
\]
Therefore,
\[
\frac{\area(P_2)}{\diam(P)-s_1} \geq \frac{\area(P_1)}{s_1}.
\]
Because $P_2$ is contained in a rectangle with one edge of length
$\diam(P)-s_1$ parallel to $v_1v_2$, and one edge of length $\frac{4}{k^6}\cdot \diam(P)$
normal to $v_1v_2$, we have
\[
\diam(P_2) \leq \diam(P) - s_1 + \frac{4}{k^6}\cdot \diam(P).
\]
Putting everything together, we get
\begin{align*}
\ar(P_2) &= \frac{\diam(P_2)^2}{\area(P_2)}
         \leq \frac{(\diam(P)\cdot(1+4/k^6) - s_1)^2}{(1-a)\cdot \area(P)}\\
         &\leq \ar(P) \cdot \left(\frac{1+4/k^6-a}{\sqrt{1-a}}\right)^2
         \leq \ar(P) \cdot \left( 1+4\cdot \frac{\sqrt{2}}{k^6} \right)^2\\
         &\leq \ar(P) \cdot \left( 1+ \frac{1}{k^2} \right)^2
         \leq \ar(P) \cdot \left( 1+ \frac{3}{k^2} \right) \\
         &\leq \ar(P) \cdot \left( 1+ \frac{1}{k} \right).
\end{align*}
\end{description}
\end{description}
This concludes the proof.
\end{proof}
Now we have all the necessary tools to prove a bound on the
aspect ratio of the polygonal partition constructed by
the greedy method.
\begin{theorem}
\label{theorem:plane_partition}
Let $\tree$ be a properly weighted tree with $n$ nodes.
Then the greedy partitioning method constructs a polygonal partition for $\tree$
whose aspect ratio is $O\left(\left(\mydepth(\tree) + \log{n}\right)^8\right)$.
\end{theorem}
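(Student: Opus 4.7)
The plan is induction on the depth $d$ of the binary tree $\tree'$ obtained from $\tree$ by the balancing construction of Section~\ref{sect:two_methods}, using the fact that $\mydepth(\tree') = O(\mydepth(\tree) + \log n)$. Writing $D = \mydepth(\tree')$, I claim that every polygon $P$ produced by the greedy method at depth $d$ of $\tree'$ has at most $4+d$ vertices and satisfies $\ar(P) \le (4+d)^8$. Setting $d = D$ then yields the theorem.

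For the base case, the root polygon is the unit square, which has $4$ vertices and aspect ratio $2 \le 4^8$. For the inductive step, let $P$ be a polygon at depth $d$ with $k \le 4+d$ vertices and $\ar(P) \le (4+d)^8$, and let the two children of $P$'s node in $\tree'$ correspond to subpolygons of areas $a \cdot \area(P)$ and $(1-a)\cdot \area(P)$ for some $a \in (0,1/2]$. Applying Lemma~\ref{lemma:circular_cut} to $P$ with this $a$ produces a specific cut into subpolygons $P_1, P_2$ having the required areas, at most $k+1 \le 5+d$ vertices each, and aspect ratios satisfying $\max\{\ar(P_1), \ar(P_2)\} \le \max\{\ar(P)(1+6/k),\, k^8\}$. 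Since greedy minimizes the maximum aspect ratio over all valid cuts, its output is bounded by the same quantity; and since every straight-line cut of a convex $k$-gon produces subpolygons with at most $k+1$ vertices, the vertex-count invariant is preserved regardless of which cut greedy chooses.

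All that remains is the numerical verification that $\max\{(4+d)^8(1+6/(4+d)),\, (4+d)^8\} \le (5+d)^8$. Writing $x = 4+d$, the dominant term is $x^7(x+6)$, and $(x+1)^8 - x^7(x+6) = 2x^7 + 28x^6 + 56x^5 + \ldots \ge 0$ by the binomial expansion, while the reset term $k^8 \le (4+d)^8 \le (5+d)^8$ causes no trouble. This closes the induction and gives $\ar(P) \le (4+D)^8 = O((\mydepth(\tree) + \log n)^8)$.

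The main delicate point is getting the two halves of the inductive invariant to cooperate: the multiplicative drift $(1+6/k)$ in Lemma~\ref{lemma:circular_cut} grows when $k$ is small, so the argument relies on $k$ being well-coordinated with depth along every root-to-leaf path in $\tree'$. The exponent $8$ is the smallest one that simultaneously accommodates both constraints imposed by the lemma's bound: large enough that the drift factor is absorbed into the increment $(4+d)^8 \to (5+d)^8$ (via $x^7(x+6) \le (x+1)^8$), and large enough that the reset term $k^8 \le (4+d)^8$ never escapes the target. Any smaller exponent would break one of these two inequalities, which is presumably why the paper commits to the constant $\gb = 8$ throughout.
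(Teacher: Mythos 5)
Your proof is correct and follows essentially the same route as the paper: the same inductive invariant (at most $4+d$ vertices and aspect ratio at most $(4+d)^8$ at depth $d$, versus the paper's $(i+3)^8$), the same use of Lemma~\ref{lemma:circular_cut} as an existence statement that upper-bounds what greedy can do, and the same numerical check $x^7(x+6)\le (x+1)^8$. The one point you flag but do not fully discharge --- that the drift factor $1+6/k$ is \emph{larger} when the polygon happens to have fewer than $4+d$ vertices --- is glossed over identically in the paper's recurrence, and is repaired in both arguments the same way, by applying the lemma with $k$ set to the depth-based vertex bound (viewing the polygon as a degenerate $k$-gon).
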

\begin{proof}
Recall from Section~\ref{sect:two_methods} that our algorithm starts by transforming
$\tree$ to a binary tree~$\tree'$ of height $O(\mydepth(\tree)+\log n)$.
This is done in such a way that a polygonal partition for
$\tree'$ induces a polygonal partition for $\tree$ of the same (or better) aspect ratio.
We then recursively apply the greedy cutting strategy to $\tree'$. Hence, it suffices to
show that a recursive application of the greedy strategy to a binary tree of height~$h$
produces a polygonal partition of aspect ratio $(h+3)^{\gb}$.

Let $A(i)$ be the worst-case aspect ratio of a polygon $P(\node)$ produced
by the greedy method over all nodes $\node$ at depth~$i$ in the tree.
Note that $P(\myroot(\tree))$ is a square and each polygon
at depth~$i$ has at most $i+4$ vertices.
By Lemma~\ref{lemma:circular_cut} we thus have
\[
A(i) \leq \begin{cases}
                 2 & \mbox{if $i=0$,} \\
                 \max \left\{ (i+3)^8, \left(1+\frac{6}{i+3}\right)\cdot A(i-1) \right\} & \mbox{if $i>0$.}
              \end{cases}
\]
We can now prove by induction that $A(i) \leq (i+3)^8$. Indeed, for $i=0$ this
obviously holds, and for $i>0$ we have
\begin{align*}
A(i) &\leq \max \left\{ (i+3)^8, \left(1+\frac{6}{i+3}\right)\cdot A(i-1) \right\}\\
&\leq \max \left\{ (i+3)^8, \left(1+\frac{6}{i+3}\right)\cdot (i+2)^8 \right\}
\end{align*}
and
\begin{align*}\textstyle
\left(1+\frac{6}{i+3}\right)\cdot (i+2)^8
  &= \textstyle\left\{ \left(1+\frac{6}{i+3}\right) \cdot \left( \frac{i+2}{i+3} \right)^8 \right\} \cdot (i+3)^8 \\
  &= \textstyle\left\{ \frac{1+\frac{6}{i+3}}{\left(1+\frac{1}{i+2}\right)^8} \right\} \cdot (i+3)^8
  \\
  &< \textstyle\left\{ \frac{1+\frac{6}{i+3}}{1+\frac{8}{i+2}} \right\} \cdot (i+3)^8
  \\
  &< \textstyle(i+3)^8.
\end{align*}
Hence, $A(h)< (h+3)^8$, which finishes the proof.
\end{proof}

\section{A lower bound for polygonal partitions}\label{sect:lower-bound}
In the previous sections we have seen that any properly weighted tree $\tree$
with $n$ nodes admits a polygonal partition whose aspect ratio is
$O(\mydepth(\tree)+\log n)$. In this section we prove this is almost tight
in the worst case, by exhibiting a tree for which any polygonal partition
has aspect ratio $\Omega(\mydepth(\tree))$.

We start with an easy lower bound on the aspect ratio of a convex polygon
in terms of its smallest angle.
\begin{observation}\label{le:angle-vs-aspect}
Let $P$ be a convex polygon and let $\alpha$ be the smallest interior angle of~$P$.
Then the aspect ratio of $P$ is at least $2/\alpha$.
\end{observation}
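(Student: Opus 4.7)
The plan is to upper-bound $\area(P)$ directly in terms of $\diam(P)$ and $\alpha$, and then divide. Concretely, I would pick a vertex $v$ of $P$ whose interior angle equals $\alpha$ and use this vertex as the apex of a circular sector that contains all of $P$.

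First I would observe two simple containments. By convexity, $P$ lies entirely inside the wedge bounded by the two edges of $P$ incident to $v$; this wedge has opening angle $\alpha$ at $v$. Simultaneously, for every point $p\in P$ we have $\dist{v}{p}\leq \diam(P)$ by the very definition of the diameter. Intersecting these two regions, $P$ is contained in a circular sector $S$ with apex $v$, radius $\diam(P)$, and angle $\alpha$.

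Next I would compute $\area(S)=\tfrac{1}{2}\alpha\cdot\diam(P)^2$ and conclude
\[
\ar(P)=\frac{\diam(P)^2}{\area(P)}\geq \frac{\diam(P)^2}{\area(S)}=\frac{2}{\alpha}.
\]
There is essentially no obstacle here: the only thing to be a bit careful about is the convexity step, which guarantees that $P$ actually lies in the wedge at $v$ (rather than escaping through $v$ into the reflex region). Since $P$ is convex and the two edges at $v$ form its boundary locally, $P$ is on the wedge side of each of those edges, and the containment is immediate.
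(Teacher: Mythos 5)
Your argument is correct and is essentially identical to the paper's own proof: both place the apex of a circular sector of radius $\diam(P)$ and angle $\alpha$ at a vertex realizing the smallest interior angle, contain $P$ in that sector by convexity, and divide by the sector's area $(\alpha/2)\cdot\diam(P)^2$. No issues.
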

\begin{proof}
Let $v$ be a vertex of $P$ whose interior angle is $\alpha$.
Then $P$ is contained in the circular sector with radius
$\diam(P)$ and angle $\alpha$ whose apex is at $v$.
This sector has area $(\alpha/2)\cdot\diam(P)^2$, from which
the observation readily follows.
\end{proof}
Next we show how to construct, for any given height $h$,
a weighted tree $\tree$ of height $h$ such that any polygonal
partition for $\tree$ has a region with a very small angle.
The lower bound on the aspect ratio then follows
immediately from Observation~\ref{le:angle-vs-aspect}.

\begin{figure}[t]
\begin{center}
\includegraphics{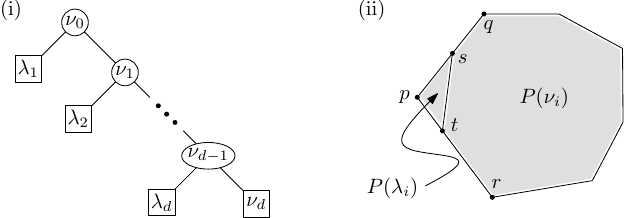}
\caption{(i) Structure of the tree for the lower-bound construction.
         (ii) Illustration for the proof of Lemma~\ref{le:triangle-cutoff}. }
\label{fi:lb-tree}
\end{center}
\end{figure}
The structure of $\tree$ is depicted in Fig.~\ref{fi:lb-tree}(i).
The tree $\tree$ has $h+1$ nodes $\nu_0,\ldots,\node_{h}$ that form a path,
and $h$ other (leaf) nodes $\lambda_1,\ldots,\lambda_{h}$ branching off the path.
The idea will be to choose the weights of the leaves very small, so
that the only way to give the region $P(\lambda_i)$ the required
area, is to cut off a small triangle from~$P(\node_i)$. Then we will argue
that one of these triangles must have a small angle.

To make this idea precise we define
\[
x_i := \begin{cases}
                     1 & \mbox{if $i=0$,} \\
                     x_{i-1} / (2\sqrt{h}) & \mbox{if $0<i\leq h$,}
                  \end{cases}
\]
and we set $\weight(\lambda_i) := x_{i-1}^2 /(4h)$ for $1\leq i\leq h$.
Note that defining the weights for $\lambda_1,\ldots,\lambda_h$
implicitly defines the weights for $\node_1,\ldots,\node_{h}$ as well.

\begin{lemma}\label{le:triangle-cutoff}
If each region created in a polygonal partition for $\tree$ has aspect ratio at most $h$,
then we have for $0\leq i\leq h$,
\begin{enumerate}
\item[(i)] $P(\lambda_{i-1})$ is a triangle (if $\lambda_{i-1}$ exists, that is, if $i\neq 0$);
\item[(ii)] $P(\node_i)$ has $i+4$ sides, each of length at least $x_i$.
\end{enumerate}
\end{lemma}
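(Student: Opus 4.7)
The plan is to prove both statements simultaneously by induction on $i$. The base case $i=0$ is immediate: statement (i) is vacuous, and $P(\node_0)$ is the unit square, giving $4=0+4$ sides each of length $1=x_0$.

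For the inductive step I would assume (ii) holds at $i-1$ and derive both (i) and (ii) at $i$. Because $P(\lambda_i)$ and $P(\node_i)$ are convex, their common boundary is a single line segment $c$, the cut. To prove (i), note that $\ar(P(\lambda_i)) \le d$ together with $\area(P(\lambda_i)) = x_{i-1}^2/(4d)$ yields $\diam(P(\lambda_i)) \le x_{i-1}/2$. By the inductive hypothesis each edge of $P(\node_{i-1})$ has length at least $x_{i-1}$, strictly greater than $\diam(P(\lambda_i))$, so $P(\lambda_i)$ cannot contain an entire edge of $P(\node_{i-1})$ on its boundary. The portion of $\partial P(\node_{i-1})$ lying in $\partial P(\lambda_i)$ is therefore a connected chain that meets only interior points of at most two edges, which must be adjacent (sharing some vertex $v$) for the chain to be connected. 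Consequently $P(\lambda_i)$ is the triangle $v x_1 x_2$, where $x_1, x_2$ are the endpoints of $c$ on two adjacent edges $e_1, e_2$ of $P(\node_{i-1})$.

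For (ii) I would first count: $P(\node_i)$ consists of the $i+1$ edges of $P(\node_{i-1})$ other than $e_1, e_2$, the two surviving portions of $e_1$ and $e_2$, and the cut, for a total of $i+4$ sides. The length bounds split into three cases. The $i+1$ untouched edges have length $\ge x_{i-1} \ge x_i$ by induction. Each surviving partial edge has length at least $|e_j| - \diam(P(\lambda_i)) \ge x_{i-1} - x_{i-1}/2 = x_{i-1}/2 \ge x_i$, since $\sqrt{d} \ge 1$. Finally, to lower-bound the cut $c$, I would combine the area identity $\tfrac{1}{2}\,|v x_1|\,|v x_2|\,\sin\alpha = x_{i-1}^2/(4d)$, the aspect-ratio constraint on the triangle (which forces $|v x_1|, |v x_2| \le x_{i-1}/2$), and the lower bound $\alpha \ge 2/d$ on the interior angle at $v$ supplied by Observation~\ref{le:angle-vs-aspect} applied to $P(\node_{i-1})$; the law of cosines then yields the required bound $c \ge x_i$.

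The main obstacle is this last step. The other three types of side-length bounds fall out essentially immediately from the induction and the diameter estimate on the small triangle, but the cut is an internal chord of that triangle whose length is not directly constrained by the diameter. Ruling out a degenerately short cut requires that the angle $\alpha$ at $v$ be bounded away from zero, which is precisely what the aspect-ratio hypothesis on $P(\node_{i-1})$ together with Observation~\ref{le:angle-vs-aspect} buys us; balancing these ingredients against the area constraint is the delicate part of the argument.
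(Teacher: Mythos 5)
Your base case, your argument for part (i), your count of the sides of $P(\node_i)$, and your bounds for the untouched edges and for the two truncated edges all match the paper's proof. The gap is exactly where you suspected it: the lower bound on the cut $c = x_1x_2$. The ingredients you propose to combine there are genuinely insufficient. Consider the isoceles configuration $|vx_1|=|vx_2|=x_{i-1}/2$ with apex angle $\alpha = 2/d$. Its area is $\tfrac12\cdot\tfrac{x_{i-1}^2}{4}\sin(2/d)\approx x_{i-1}^2/(4d)$, as required; its diameter is $x_{i-1}/2$, so its aspect ratio is exactly $d$; and the bound $\alpha\ge 2/d$ supplied by Observation~\ref{le:angle-vs-aspect} holds with equality. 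Yet the law of cosines gives $|c|^2 = \tfrac{x_{i-1}^2}{2}\bigl(1-\cos(2/d)\bigr)\approx x_{i-1}^2/d^2$, which is far below $x_i^2 = x_{i-1}^2/(4d)$ once $d>4$. More generally, with $|vx_1|=|vx_2|$ and the area fixed, one gets $|c|^2 = 4\area(P(\lambda_i))\tan(\alpha/2)$, so your route requires $\tan(\alpha/2)\ge 1/4$, i.e.\ $\alpha$ bounded below by an absolute constant; the bound $\alpha\ge 2/d$ cannot deliver that.

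The paper closes this hole with a structural invariant that your induction is missing: every interior angle of every $P(\node_j)$ is at least $\pi/2$. This is itself an easy induction --- the square starts with right angles, and when a corner with angle $\theta\ge\pi/2$ is cut off, the two new angles of $P(\node_{j+1})$ equal $\pi$ minus the base angles of the cut-off triangle, which sum to $\pi-\theta\le\pi/2$, so each new angle is at least $\theta\ge\pi/2$. Once $\alpha\ge\pi/2$, the apex is the largest angle of the triangle $vx_1x_2$, so $c$ is its longest side, and then $\area(P(\lambda_i)) = \tfrac12|vx_1|\,|vx_2|\sin\alpha\le\tfrac12|c|^2$ gives $|c|\ge\sqrt{\area(P(\lambda_i))} = x_{i-1}/(2\sqrt{d}) = x_i$ directly, with no law of cosines needed. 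If you add this angle invariant to your inductive hypothesis (or prove it as a standalone observation), the rest of your argument goes through as written.
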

\begin{proof}
We will prove the lemma by induction on~$i$.
For $i=0$ the lemma is obviously true, since $P(\node_0)$ is the unit square.

Now let $i>0$. By the induction hypothesis, each side of $P(\node_{i-1})$
has length at least~$x_{i-1}$. If $P(\lambda_{i})$ fully contained an edge
of $P(\node_{i-1})$, its diameter would therefore be more than $x_{i-1}$.
But then its aspect ratio would be
\[
\frac{\diam(P(\lambda_{i}))^2}{\area(P(\lambda_{i}))}
  \geq   \frac{x_{i-1}^2}{\weight(\lambda_{i})}
    =    \frac{x_{i-1}^2}{x_{i-1}^2/4h}
    =    4h,
\]
which contradicts the assumptions. Hence, $P(\lambda_{i})$ is a triangle, as claimed.
It also follows that $P(\node_{i})$ has $i+4$ sides. It remains to show that
these sides have length at least~$x_i$.
\medskip

Let $st$ be the segment that cuts $P(\node_{i-1})$ into $P(\node_i)$ and $P(\lambda_{i})$,
let $p$ be the corner of $P(\node_{i-1})$ that is cut off,
and let $q$ and $r$ be the corners of $P(\node_{i-1})$ adjacent
to $p$---see Fig.~\ref{fi:lb-tree}(ii).
All sides of $P(\node_i)$ except $qs$, $st$, and $tr$ have length at least $x_{i-1}$
so they definitely have length at least $x_i$.

It is easy to see that the angles
of any polygon $P(\node_j)$ are at least $\pi/2$---indeed, when a corner is cut
off from some $P(\node_j)$, the two new angles appearing in $P(\node_{j+1})$
are larger than the angle at the corner that is cut off.
Hence, $st$ is the longest edge of the triangle $pst = P(\lambda_{i})$,
and we have
\[
|st| \geq \sqrt{\area(P(\lambda_{i}))}
      = \sqrt{\weight(\lambda_{i})}
      = \sqrt{\frac{x_{i-1}^2}{4h}}
      = \frac{x_{i-1}}{2\sqrt{h}}
      = x_i.
\]
Next we show that $qs$ has length at least $x_i$; the argument
for $tr$ is similar. Note that $|ps| \leq \sqrt{h \cdot \weight(\lambda_{i})}$,
otherwise $P(\lambda_i)$'s aspect ratio would be larger than~$h$.
Hence, we have
\begin{align*}
|qs| &=    |pq| - |ps| \\
     &\geq  x_{i-1} - |ps| \qquad \mbox{(induction hypothesis)} \\
     &\geq  x_{i-1} - \sqrt{h \cdot \weight(\lambda_{i})} \\
     &=  x_{i-1} - \sqrt{h \cdot \frac{x_{i-1}^2}{4h}} \\
     &=  \frac{x_{i-1}}{2} \\
     &\geq  x_i.%
\end{align*}%
\end{proof}
Next we show that one of the triangles $P(\lambda_i)$ must have large aspect ratio.
\begin{lemma}\label{le:small-angle}
If each region created in a polygonal partition for $\tree$ has aspect ratio at most $h$,
then there is a region $P(\lambda_i)$ where one of whose interior angles
is at most $2\pi/(h+4)$.
\end{lemma}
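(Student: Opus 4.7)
The strategy is a short angle-counting argument. By Lemma~\ref{le:triangle-cutoff}, for each $1\le i\le d$ the region $P(\lambda_i)$ is a triangle $p_i s_i t_i$ obtained by cutting the corner $p_i$ off $P(\node_{i-1})$; write its three interior angles as $\gamma_i$ at $p_i$, $\alpha_i$ at $s_i$, and $\beta_i$ at $t_i$, so that $\gamma_i+\alpha_i+\beta_i=\pi$. Passing from $P(\node_{i-1})$ to $P(\node_i)$ deletes the vertex $p_i$ and inserts the two new vertices $s_i,t_i$, whose interior angles in $P(\node_i)$ are $\pi-\alpha_i$ and $\pi-\beta_i$; every other vertex angle of $P(\node_i)$ is inherited unchanged from $P(\node_{i-1})$.

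First I would record the auxiliary fact, proved by a short induction on $j$ (essentially the observation already used inside the proof of Lemma~\ref{le:triangle-cutoff}), that every interior angle of every $P(\node_j)$ is at least $\pi/2$. Indeed $P(\node_0)$ is the unit square; and if the claim holds for $P(\node_{j-1})$ then $\gamma_j\ge \pi/2$, so $\alpha_j+\beta_j=\pi-\gamma_j\le \pi/2$, giving $\alpha_j,\beta_j\le \pi/2$ and therefore $\pi-\alpha_j,\pi-\beta_j\ge \pi/2$.

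The main step is then to count the angles of $P(\node_d)$ in two different ways. Let $r$ be the number of indices $i$ for which $p_i$ is one of the four original corners of the unit square; then $1\le r\le 4$. A bookkeeping over the sequence $P(\node_0),\dots,P(\node_d)$ shows that exactly $4-r$ original corners survive in $P(\node_d)$, each contributing an interior angle of $\pi/2$, and that exactly $d+r$ of the inserted vertices $s_i,t_i$ survive, each contributing an interior angle of the form $\pi-\alpha_i$ or $\pi-\beta_i$. Since $P(\node_d)$ is a convex $(d+4)$-gon, its angle sum equals $(d+2)\pi$, and a short algebraic simplification of
\[
(4-r)\tfrac{\pi}{2}+(d+r)\pi-\Sigma \;=\; (d+2)\pi
\]
yields the clean identity $\Sigma = r\pi/2$, where $\Sigma$ denotes the sum of the $d+r$ surviving values drawn from the $\alpha_i$'s and $\beta_i$'s.

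Finally, pigeonhole on these $d+r$ values produces one that is at most $r\pi/[2(d+r)]$; since $r\le 4$, the inequality $r\pi/[2(d+r)]\le 2\pi/(d+4)$ holds (it reduces to $rd\le 4d$). As this surviving value is itself an interior angle of some triangle $P(\lambda_i)$, the lemma follows. I expect the only delicate part of the writeup to be the vertex bookkeeping that produces exactly $d+r$ surviving inserted vertices and the identity $\Sigma = r\pi/2$; the rest is immediate.
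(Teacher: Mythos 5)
Your proof is correct, and it reaches the stated bound $2\pi/(d+4)$ by a different accounting than the paper. The paper's proof is shorter: it uses only that $P(\node_d)$ is a convex $(d+4)$-gon, so by averaging some interior angle of $P(\node_d)$ is at least $\pi-\frac{2\pi}{d+4}$; that vertex cannot be an original square corner (whose angle is $\pi/2$), so it is some $s_i$ or $t_i$, and since that point lies in the relative interior of an edge of $P(\node_{i-1})$, the complementary angle in the adjacent triangle $P(\lambda_i)$ is at most $\frac{2\pi}{d+4}$. You instead do an exact double count, tracking the number $r$ of cut-off original corners, deriving $\Sigma=r\pi/2$ for the sum of the $d+r$ surviving supplementary angles, and pigeonholing on that sum; the bookkeeping ($4-r$ surviving corners, $d+r$ surviving inserted vertices, angles unchanged by later cuts) all checks out, and the final inequality $\frac{r\pi}{2(d+r)}\le\frac{2\pi}{d+4}$ for $r\le 4$ is right. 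Your version is heavier but buys a slightly sharper statement (the \emph{average} surviving supplementary angle is at most $\frac{2\pi}{d+4}$, not just the minimum), and it lands exactly on the lemma's constant; the paper's printed proof actually contains a factor-of-two slip (it writes the interior-angle sum of a $(d+4)$-gon as $(d+2)\cdot 2\pi$ rather than $(d+2)\pi$ and concludes with $4\pi/(d+4)$), which your argument avoids. One small remark: the auxiliary fact that all angles of $P(\node_j)$ are at least $\pi/2$ is not actually used anywhere in your counting, so you could drop it.
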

\begin{proof}
By the previous lemma, the region $P(\node_h)$ has $h+4$ sides.
The sum of the interior angles of a $(h+4)$-gon is
exactly $(h+2)\cdot 2\pi$, so one of the interior angles of $P(\node_h)$ must
be at least
\[
\frac{(h+2)\cdot 2\pi}{h+4} = 2\pi - \frac{4\pi}{h+4}.
\]
If the two edges meeting at some corner $p$ of $P(\node_h)$ make an angle of
at least $2\pi - 4\pi/(h+4)$ in $P(\node_h)$, then they must make an
angle of at most $4\pi/(h+4)$ in a region $P(\lambda_i)$ adjacent to $p$.
\end{proof}
\begin{theorem}
For any $h$, there is a weighted tree $\tree$ of height $h$ such that any polygonal partition
for $\tree$ has aspect ratio $\Omega(h)$.
\end{theorem}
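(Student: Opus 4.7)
The plan is to combine Observation~\ref{le:angle-vs-aspect}, Lemma~\ref{le:triangle-cutoff}, and Lemma~\ref{le:small-angle} through a simple dichotomy on the aspect ratio of the partition. I take $\tree$ to be the tree described above together with the weights $\weight(\lambda_i) = x_{i-1}^2/(4d)$ already specified, and I let $A$ denote the maximum aspect ratio of any region in an arbitrary polygonal partition for $\tree$.

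The argument then splits into two cases. In the easy case $A > d$, we immediately have $A = \Omega(d)$ and there is nothing more to prove. In the remaining case $A \leq d$, every region satisfies the hypothesis ``aspect ratio at most $d$'' required by Lemmas~\ref{le:triangle-cutoff} and~\ref{le:small-angle}. Lemma~\ref{le:small-angle} then produces a leaf region $P(\lambda_i)$ having some interior angle of size at most $2\pi/(d+4)$, and Observation~\ref{le:angle-vs-aspect} forces $\ar(P(\lambda_i)) \geq 2/(2\pi/(d+4)) = (d+4)/\pi$, so $A \geq (d+4)/\pi = \Omega(d)$ in this case as well. Combining the two cases gives the claimed lower bound.

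The main substantive obstacle has already been dispatched by Lemma~\ref{le:triangle-cutoff}: the inductive control of the $(i+4)$-gon $P(\node_i)$, which forces each $P(\lambda_{i+1})$ to be a triangle cut off from a corner of $P(\node_i)$ and keeps every edge of $P(\node_i)$ of length at least $x_i$. Given this structure, Lemma~\ref{le:small-angle} is a one-line consequence of the angle-sum identity for a $(d+4)$-gon, and the present theorem is merely the combinatorial wrap-up above. No new geometric insight is required at this final step; the only thing to verify is the trivial fact that both $d$ and $(d+4)/\pi$ are linear in $d$, so that the dichotomy really does yield a uniform $\Omega(d)$ constant.
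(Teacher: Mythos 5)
Your proposal is correct and follows essentially the same route as the paper's own proof: a dichotomy on whether the partition's aspect ratio exceeds $d$, followed by an application of Lemma~\ref{le:small-angle} and Observation~\ref{le:angle-vs-aspect} in the remaining case. The only (immaterial) difference is the constant in the small angle ($2\pi/(d+4)$ from the lemma statement versus $4\pi/(d+4)$ in the paper's write-up), which does not affect the $\Omega(d)$ bound.
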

\begin{proof}
Consider the tree $\tree$ described above. Assume the aspect ratio of the polygonal partition
for $\tree$ is less than $h$---otherwise we are done.
Then by Lemma~\ref{le:small-angle} there is a region
with interior angle  $4\pi/(h+4)$. By Observation~\ref{le:angle-vs-aspect} this region
has aspect ratio at least $(h+4)/2\pi=\Omega(h)$.
\end{proof}

\section{Partitions with slack}\label{sect:slack}
Recall that a rectangular partition is a polygonal partition in which
all polygons are rectangles.
We show that if we allow a small distortion of the areas of the rectangles,
then there exists a rectangular partition of small aspect ratio.
This result can also be obtained for partitions of a hypercube in~$\Reals^d$.
We now define more precisely which type of distortion we allow.

Let $\tree$ be a properly weighted tree.
A \emph{rectangular partition with $\eps$-slack} in $\Reals^d$ for $\tree$ assigns
a $d$-dimensional hyperrectangle $R(\node)$ to each node $\node\in\tree$ such that
\begin{itemize}
\item the hyperrectangle $R(\myroot(\tree))$ is the unit hypercube in $\Reals^d$;
\item for any two nodes $\node,\othernode$ such that $\othernode$ is a child of $\node$ we have
      \[ (1-\eps)\cdot \frac{\Vol(R(\othernode))}{\weight(\othernode)}
      \leq \frac{\Vol(R(\node))}{\weight(\node)}
      \leq \frac{\Vol(R(\othernode))}{\weight(\othernode)}.
      \]
\item for any node $\node$, the hyperrectangles assigned to the children of $\node$
      have pairwise disjoint interiors and are contained in $R(\node)$.
\end{itemize}
Observe that as we go down the tree $\tree$, the volumes of the hyperrectangles
can start to deviate more and more from their weights. However, the relative volumes
of the hyperrectangles of the children of a node $\node$ stay roughly the same and together
they still cover $R(\node)$ almost entirely.

For convenience, we will work with the \emph{rectangular aspect ratio}
of a $d$-dimensional hyperrectangle rather than using the aspect-ratio definition
given earlier. The rectangular aspect ratio $\RAR(R)$ of hyperrectangle
$R$ with side lengths $s_1,s_2,\ldots,s_d$ is defined as
$\RAR(R) := \frac{\max_i s_i}{\min_i s_i}$.
It can easily be shown that for 2-dimensional rectangles, the aspect ratio and the rectangular
aspect ratio are within a constant factor.
We define the rectangular aspect ratio of a rectangular partition as
the maximum rectangular aspect ratio of any of the hyperrectangles in the partition.

Our algorithm to construct a rectangular partition
always cuts perpendicular to the longest side of a hyperrectangle~$R(\node)$.
Since we can shrink each of the hyperrectangles
of the children of $\node$ by a factor $1-\eps$, we have some extra space
to keep the aspect ratios under control. We will prove that this means that
the longest-side-first strategy can ensure a rectangular aspect ratio of~$1/\eps$.
The basic tool is the following lemma.
\begin{lemma}\label{lemma:slack-cut}
Let $0<\eps<1/3$, and let $R$ be a hyperrectangle with $\RAR(R)\leq 1/\eps$.
Let $S=\{w_1,\ldots,w_k\}$ be a set of weights with $\sum_{i=1}^k w_i = (1-\eps)\cdot \Vol(R)$.
Then there exists a set $\{R_1,\ldots,R_k\}$ of pairwise disjoint hyperrectangles,
each contained in $R$, such that
for all $1\leq i\leq k$ we have $\Vol(R_i) = w_i$ and $\RAR(R_i) \leq 1/\eps$.
\end{lemma}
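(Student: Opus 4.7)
The plan is to prove the lemma by induction on the number of weights $k$. The base case $k=1$ is immediate: take $R_1$ to be the sub-hyperrectangle of $R$ similar to $R$ with linear scale factor $(1-\eps)^{1/d}$ in every coordinate, so $\Vol(R_1)=w_1$ and $\RAR(R_1)=\RAR(R)\leq 1/\eps$. For $k\geq 2$, I assume the sides of $R$ are ordered $s_1\geq s_2\geq\cdots\geq s_d$ and write $V=\Vol(R)$, $W=(1-\eps)V=\sum_i w_i$, and $L=\eps W s_2/s_1$. A short computation, using that $s_d\geq \eps s_1$ (which follows from $\RAR(R)\leq 1/\eps$), shows that cutting $R$ perpendicular to the longest side at position $t$ produces two sub-hyperrectangles each of rectangular aspect ratio at most $1/\eps$ exactly when $t\in[\eps s_2,\,s_1-\eps s_2]$.

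Next I branch on whether some subset $S_1\subsetneq S$ satisfies $\sum S_1\in[L,W-L]$. If yes, I cut $R$ at $t=(\sum S_1)\,s_1/W$, which then lies in the feasible interval, so both resulting slabs have aspect ratio $\leq 1/\eps$. By construction, the weights assigned to each side exactly meet the slack-$\eps$ requirement ($\sum S_1=(1-\eps)\Vol(\text{slab})$, and similarly on the complement), so the inductive hypothesis applies to both sub-problems, each on strictly fewer weights.

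If no such subset exists, a prefix-sum argument (using that $L<W/2$ because $\eps<1/3$ and $s_2\leq s_1$) forces the existence of a single weight $w^*$ with $w^*>W-L$. This is the main obstacle, and the one requiring the cleverest handling: no cut perpendicular to the longest side can honor the slack-$\eps$ condition on both sides simultaneously, since any slab containing $w^*$ must exceed the feasible thickness. I would instead place $R_{w^*}$ as a \emph{no-slack} slab of thickness $h^*=w^* s_1/V$. Using $(1-\eps)(s_1-\eps s_2)<h^*\leq(1-\eps)s_1$ and $s_d\geq\eps s_1$, one verifies that the shortest side of $R_{w^*}$ is at least $\eps s_1$, giving $\RAR(R_{w^*})\leq 1/\eps$; and the residual $R'$ is a slab of thickness at least $\eps s_1$, so $\RAR(R')\leq 1/\eps$ as well.

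The remaining weights sum to $W-w^*$, which is strictly less than $(1-\eps)\Vol(R')=(W-w^*)+\eps w^*$, so the exact-slack hypothesis fails on $R'$. To restore it, I shrink $R'$ to a similar sub-hyperrectangle $\tilde R'\subset R'$ of volume $(W-w^*)/(1-\eps)$; this preserves the aspect ratio and re-establishes equality, and the inductive hypothesis then applies to $\tilde R'$ with the $k-1$ remaining weights. Since $R_{w^*}\subset R$ is disjoint from $\tilde R'\subset R'\subset R\setminus R_{w^*}$, the resulting hyperrectangles form the required partition.
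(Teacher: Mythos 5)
Your proof is correct and follows essentially the same route as the paper's: induction on $k$, always cutting perpendicular to the longest side, with a case split between a balanced partition of the weights (handled by recursing on both slabs with exact slack) and a single dominant weight (handled by giving it a no-slack slab and shrinking the residual to restore the slack invariant). The only differences are cosmetic — you make the feasible cut interval $[\eps s_2,\, s_1-\eps s_2]$ explicit and use the threshold $W-L$ where the paper compares $\max_i w_i$ against $(1-\eps)\cdot w(S)$.
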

\begin{proof}
We will prove this by induction on~$k$. The case $k=1$ is trivial, so now assume~$k>1$.
For a subset $S'\subset S$, we define $w(S') := \sum_{w_i\in S'} w_i$.
Assume without loss of generality that $w_1 = \max_i w_i$. There are two cases.
\begin{itemize}
\item If $w_1 \leq (1-\eps)\cdot w(S)$, then we can split $S$ into subsets $S'$
      and $S''$ such that $w(S') \geq \eps\cdot w(S)$ and $w(S'')\geq \eps\cdot w(S)$.
      Indeed, if $i^*$ is the minimum index such that $\sum_{i=1}^{i^*} w_i \geq \eps \cdot w(S)$,
      then $\sum_{i=1}^{i^*} w_i \leq (1-\eps)\cdot w(S)$ since $\eps<1/3$.
      Hence, setting $S' = \{w_1,\ldots,w_{i^*}\}$ satisfies the condition.

      We now cut $R$ perpendicular to its longest side into hyperrectangles
      $R'$ and $R''$ such that
      \[
      \Vol(R') = \Vol(R) \cdot \frac{w(S')}{w(S)} \ \ \mbox{ and } \ \
      \Vol(R'') = \Vol(R) \cdot \frac{w(S'')}{w(S)}.
      \]
      Note that this implies that
      \[
      w(S') = \frac{w(S)}{\Vol(R)} \cdot \Vol(R') = (1-\eps)\cdot \Vol(R').
      \]
      Since $w(S') \geq \eps\cdot w(S)$ and the cut is perpendicular to the longest side,
      we have $\RAR(R') \leq 1/\eps$.  Hence, by induction there exists a set
      of pairwise disjoint hyperrectangles, each contained in $R'$, whose
      volumes are the weights in $S'$ and whose aspect ratios are at most $1/\eps$.
      Similarly, there is a collection of suitable hyperrectangles contained
      in $R''$ for the weights in $S''$.
      Hence, we have found a set of hyperrectangles for $S$ satisfying all the conditions.
\item If $w_1>(1-\eps)\cdot w(S)$ we
      split $R$ into $R_1$ and $R'$ by cutting perpendicular to the longest side,
      such that $\Vol(R_1) = w_1$. Since
      \[
      w_1 > (1-\eps)\cdot w(S) = (1-\eps)^2 \cdot \Vol(R) > \eps \cdot \Vol(R),
      \]
      we have $\RAR(R_1)\leq 1/\eps$.
      Furthermore, since $w(S)=(1-\eps)\cdot \Vol(R)$ we certainly have
      $w_1<(1-\eps)\cdot \Vol(R)$, and so
      \[
      \Vol(R') = \Vol(R)-w_1 > \eps\cdot\Vol(R).
      \]
      This implies that $\RAR(R')\leq 1/\eps$.
      Finally, we note that
      \[
      \frac{w(S\setminus\{w_1\})}{\Vol(R')}
         = \frac{w(S)-w_1}{\Vol(R)-w_1}
         < \frac{w(S)}{\Vol(R)}
         = 1 - \eps.
      \]
      Hence, we can slightly shrink $R'$ such that
      $w(S\setminus\{w_1\}) = (1-\eps)\cdot\Vol(R')$, which means we can apply the induction
      hypothesis to obtain a suitable set of hyperrectangles for the weights
      in $S\setminus \{w_1\}$ inside~$R'$. Together with the hyperrectangle
      $R_1$ for $w_1$, this gives us a collection
      of hyperrectangles for the weights in $S$ satisfying all the conditions.%
\end{itemize}%
\end{proof}
We can now prove our final result on partitions with slack.
\begin{theorem}\label{theorem:rect_partition}
Let $0<\epsilon <1/3$, and let $d\geq 2$.
Let $\tree$ be a properly weighted tree. Then there
exists a rectangular partition with $\eps$-slack in $\Reals^d$ for $\tree$
whose aspect ratio is at most $1/\eps$.
\end{theorem}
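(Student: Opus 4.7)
The plan is a straightforward top-down recursion on $\tree$, invoking Lemma~\ref{lemma:slack-cut} exactly once per internal node. The invariant to maintain is that every hyperrectangle $R(\node)$ produced by the procedure is contained in $[0,1]^d$ and satisfies $\RAR(R(\node)) \le 1/\eps$.

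For the base case I would set $R(\myroot(\tree)) := [0,1]^d$; this is the unit hypercube, so $\RAR = 1 \le 1/\eps$ (using $\eps < 1$) and $\Vol(R(\myroot(\tree)))/\weight(\myroot(\tree)) = 1$ since the root weight is normalized. For the inductive step, let $\node$ be an internal node whose hyperrectangle $R(\node)$ has already been placed with $\RAR(R(\node)) \le 1/\eps$, and let $\othernode_1,\ldots,\othernode_k$ be its children with weights $w_i := \weight(\othernode_i)$ summing to $\weight(\node)$. Define target volumes
\[
v_i \;:=\; (1-\eps)\cdot \frac{w_i}{\weight(\node)}\cdot \Vol(R(\node))
\]
so that $\sum_{i=1}^k v_i = (1-\eps)\,\Vol(R(\node))$. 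Because the induction hypothesis gives $\RAR(R(\node)) \le 1/\eps$, Lemma~\ref{lemma:slack-cut} applies to $R(\node)$ with the weight set $\{v_1,\ldots,v_k\}$ and yields pairwise disjoint hyperrectangles $R_1,\ldots,R_k \subseteq R(\node)$ with $\Vol(R_i)=v_i$ and $\RAR(R_i)\le 1/\eps$. I would then assign $R(\othernode_i) := R_i$, which automatically preserves the invariant, the disjointness, and the containment demanded by the definition of a partition with slack.

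Checking the slack condition is then immediate: by construction
\[
\frac{\Vol(R(\othernode_i))}{\weight(\othernode_i)} \;=\; (1-\eps)\cdot \frac{\Vol(R(\node))}{\weight(\node)},
\]
so the ratio $\Vol/\weight$ at a child equals exactly $(1-\eps)$ times the ratio at its parent, which fits into the prescribed two-sided interval. There is no real obstacle beyond the lemma itself; the one observation that makes everything click is that scaling each child's fair-share volume by the factor $1-\eps$ is precisely what matches Lemma~\ref{lemma:slack-cut}'s hypothesis $\sum_i w_i = (1-\eps)\,\Vol(R)$. In other words, the slack that the theorem allows is exactly the slack that the lemma needs in order to always cut perpendicular to the longest side while keeping the rectangular aspect ratio bounded by $1/\eps$.
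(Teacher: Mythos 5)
Your proposal is correct and follows essentially the same route as the paper: root at the unit hypercube, scale each child's fair-share volume by $(1-\eps)$ so that the totals match the hypothesis $\sum_i w_i = (1-\eps)\Vol(R)$ of Lemma~\ref{lemma:slack-cut}, apply the lemma once per internal node, and recurse using the invariant $\RAR \le 1/\eps$. The only difference is cosmetic: you verify the slack condition explicitly, whereas the paper just notes the resulting volume ``is in the allowed range.''
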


\begin{proof}
We construct the rectangular partition recursively,
as follows. We start at the root of $\tree$, where we set
$R(\myroot(\tree))$ to the unit hypercube. Note that the rectangular aspect ratio of
a hypercube is~1.
Now, given a hyperrectangle $R(\node)$ of an internal node $\node$ such that the
rectangular aspect ratio is at most $1/\eps$, we will construct
hyperrectangles for the children of~$\node$ of the required aspect ratio.
To this end, for any child $\othernode$ of $\node$ we define
\[
\weight^*(\othernode) := (1-\eps)\cdot \weight(\othernode) \cdot \frac{\Vol(R(\node))}{\weight(\node)}.
\]
Observe that since $\sum_{\othernode}\weight(\othernode) = \weight(\node)$, we have
\[
\sum_{\othernode} \weight^*(\othernode) = (1-\eps)\cdot \Vol(R(\node)).
\]
Hence, we can apply Lemma~\ref{lemma:slack-cut} with
$S=\{\weight^*(\othernode): \mbox{ $\othernode$ is a child of $\node$} \}$
to partition $R(\node)$ into pairwise disjoint hyperrectangles $R_{\othernode}$,
each contained in $R(\node)$ and of aspect ratio at most~$1/\eps$.
The volume of each $R(\othernode)$ will be $\weight^*(\othernode)$,
which is in the allowed range.
Finally, we recurse on each non-leaf child.
After we the recursive process has finished, we have the desired partition.
\end{proof}

\section{Embedding ultrametrics into $\mathbb R^d$}

Before we describe our algorithm for embedding ultrametrics into $\mathbb{R}^d$,
we define \emph{$\alpha$-hi\-er\-ar\-chi\-cal well separated trees},
or $\alpha$-HSTs for short, introduced by Bartal~\cite{Bar1}.
Let $\alpha>1$ be a parameter. An $\alpha$-HST is a rooted tree~$\tree$
with all leaves on the same level, where
each node $\node$ has an associated label $l(\node)$ such that
for any node~$\node$ with parent $\othernode$ we have $l(\othernode)=\alpha\cdot l(\node)$.
The metric space that corresponds to the HST $\tree$ is defined on the leaves of~$\tree$,
and the distance between any two leaves in $\tree$ is equal to the label of the
their lowest common ancestor.

Let $M=(X,D)$ be the given ultrametric.  After scaling $M$, we can assume that the minimum distance is~1
and the diameter is~$\Delta$. For any $\alpha>1$, we can embed $M$ into an $\alpha$-HST,
with distortion $\alpha$~\cite{BartalMendelUltra}.
Given~$M$, we initially compute an embedding of $M$ into a $2$-HST $\tree$ with distortion 2.
Let $M_{\tree}=(X,D_{\tree})$ be the metric space corresponding to $\tree$.
Any embedding of $M_{\tree}$ into $\Reals^d$ with distortion $c'$, yields an embedding of
$M$ into $\mathbb{R}^d$ with distortion~$O(c')$.
It therefore suffices to embed of $M_{\tree}$ into $\Reals^d$.
With a slight abuse of notation we will from now on
denote the leaf of $\tree$ corresponding to a point $x\in X$ simply by~$x$.

\paragraph{A lower bound.}
To prove the approximation ratio of our embedding algorithm, we need a lower
bound on the optimal distortion. We will use the lower bound proved by
B\u{a}doiu~\etal\cite{BadoiuSOCG2006}, which we describe next.

Consider an embedding $\phi$ of $M_{\tree}$ into $\mathbb{R}^d$.
Assume without loss of generality that $\phi$ is non-contracting, that is,
$\phi$ does not make any distances smaller.
For each node $\node\in \tree$ we define a set $A_{\node}\subset \mathbb{R}^d$ as follows.
Let $B_d(r)$ denote the ball in $\Reals^d$ centered at the origin and of radius~$r$.
For a leaf $\node$ of $\tree$, let $A_\node$ be equal to the ball
$B_d(\frac{1}{2})$ translated so that its center is~$\phi(\node)$.
For a non-leaf node~$\node$ with children $\othernode_1,\ldots, \othernode_k$,
let $A_\node$ be the Minkowski sum of $\bigcup_{i=1}^k A_{\othernode_i}$
with $B_d(l(\node))$.
By the non-contraction of $\phi$ it follows that for each pair of nodes
$\node,\node'$ that are on the same level of $\tree$, the regions $A_{\node}$ and
$A_{\node'}$ have disjoint interiors---see Figure \ref{figure:lower} for an example.
Intuitively, the volume $A_{\node}$ is necessary to embed all the points
corresponding to the leaves below $\node$ such that the embedding is non-contracting.
This in turn can be used via an isoperimetric argument to obtain
a lower bound on the maximum distance (and, hence, distortion) between the images of these leaves.

\begin{figure}
\begin{center}
\scalebox{1}{\includegraphics{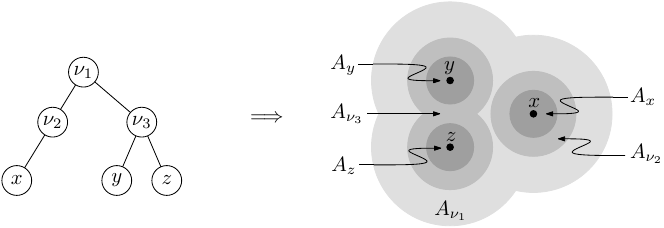}}
\caption{The sets $A_\node$ for a non-contracting embedding of an HST.\label{figure:lower}}
\end{center}
\end{figure}

We cannot compute $A_{\node}$ exactly, however, since we do not know the embedding.
Hence, following B\u{a}doiu~\etal\ \cite{BadoiuSOCG2006} we define for each $\node \in \tree$
a value $A^*(\node)$ that estimates the volume of $A_{\node}$.
Intuitively, the estimate on $A_{\node}$ is derived by the Brunn-Minkowski inequality.
More precisely, we define $A^*(\node)$ as follows.
\[
A^*(\node) = \begin{cases}
                 \Vol\left(B_d(\frac{1}{2})\right) & \mbox{if $\node$ is a leaf,} \\
                 \sum_{\mbox{$\othernode$ is a child of $\node$}} \left( A^*(\othernode)^{1/d}
                                      + \Vol\left(B_d\left(\frac{l(\node)}{4}\right)\right)^{1/d}\right)^d
                       & \mbox{otherwise.}
              \end{cases}
\]

The estimate $A^*(\node)$ can be used to obtain a lower bound on the distortion,
as made precise in the next lemma.
For any $V>0$, let $r_d(V)$ be the radius of a $d$-dimensional ball with
volume~$V$; thus we have 
\[
r_{d}(V)=\left(\frac{V\cdot \Gamma(1+d/2)}{\pi^{d/2}}\right)^{1/d} = \Theta\left(\sqrt{d}\cdot V^{1/d} \right),
\]
where $\Gamma(z) = \int_{0}^{\infty}t^{z-1}e^{-t}dt$.

\begin{lemma}[\cite{BadoiuSOCG2006}, Corollary 1]\label{lemma:ultra_lower}
Let $\opt$ denote the distortion of an optimal embedding of $M_{\tree}$ into $\mathbb{R}^d$.
Then
\[
\opt  \geq 
\max_{\mbox{\rm\small $\node$ is internal node of $\tree$}} \frac{r_d(A^*(\node))}{l(\node)} - 1.
\]
\end{lemma}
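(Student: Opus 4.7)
The plan is to fix a non-contracting optimal embedding $\phi$ of $M_{\tree}$ into $\Reals^d$ with distortion $\opt$, establish by induction on $\tree$ the volume bound $\Vol(A_\node)\ge A^*(\node)$ for every internal node $\node$, and then compare this with a containment of $A_\node$ inside a Euclidean ball whose radius is governed by $\opt\cdot l(\node)$. Dividing the two estimates yields the claim.

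For the inductive lower bound the base case is immediate, since at a leaf $A_\node=B_d(1/2)$ has volume exactly matching $A^*(\node)$. For an internal $\node$ with children $\othernode_1,\ldots,\othernode_k$ I would argue that the enlargements $A_{\othernode_i}\oplus B_d(l(\node)/4)$ are pairwise disjoint subsets of $A_\node$. Non-contraction forces $\|\phi(x)-\phi(y)\|\ge l(\node)$ for any two leaves lying below distinct children of~$\node$, and a telescoping estimate on the iterated Minkowski definition shows that in a $2$-HST each $A_{\othernode_i}$ sits inside the union of balls of radius roughly $l(\othernode_i)/2 = l(\node)/4$ around its leaf images; an extra inflation by $l(\node)/4$ therefore still fits inside the $l(\node)$-wide separation gap between different-child leaves. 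Brunn–Minkowski then gives
\[
\Vol\bigl(A_{\othernode_i}\oplus B_d(l(\node)/4)\bigr)\ \ge\ \bigl(A^*(\othernode_i)^{1/d}+\Vol(B_d(l(\node)/4))^{1/d}\bigr)^d,
\]
and summing over the disjoint enlargements reproduces exactly the recursive definition of $A^*(\node)$.

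For the upper bound, the leaves under $\node$ have pairwise HST-distance at most $l(\node)$, so by the distortion hypothesis their images all lie in a Euclidean ball of radius at most $\opt\cdot l(\node)$; since $A_\node$ is itself a union of balls of radius $O(l(\node))$ centred at these leaf images, the whole of $A_\node$ is contained in a ball of radius at most $(\opt+O(1))\cdot l(\node)$. Combining with $\Vol(A_\node)\ge A^*(\node)$ and unpacking the definition of $r_d$ gives $r_d(A^*(\node))\le(\opt+1)\cdot l(\node)$, which is the claim after rearrangement; the factor $1/4$ in the definition of $A^*$ is exactly what makes the additive constant come out to $1$ rather than something larger.

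The main obstacle is the disjointness step in the induction: one has to carry along a precise geometric estimate of how far each $A_{\othernode_i}$ can reach from its leaf images, and confirm that this reach plus $l(\node)/4$ stays strictly below $l(\node)/2$, which is the amount of slack that the $\ell_2$-separation of $l(\node)$ between different-child leaves affords. The telescoping geometric series of labels in a $2$-HST closes exactly tightly for this purpose, which is why the constant $1/4$ appears in the definition of $A^*$ and why the bound is stated in terms of $2$-HSTs.
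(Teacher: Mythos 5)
First, be aware that the paper does not prove this lemma at all: it is imported verbatim from B\u{a}doiu~\etal (their Corollary~1), and the surrounding text only defines the sets $A_\node$ and the estimates $A^*(\node)$ and gives a one-sentence intuition (the paper's description is itself loose --- it writes the Minkowski radius as $B_d(l(v))$ with $v$ undefined, and the form of $A^*$ suggests the intended radius is $l(\node)/4$, which is what you assumed). Your reconstruction follows exactly the route the cited proof takes: induction with Brunn--Minkowski to get $\Vol(A_\node)\ge A^*(\node)$, containment of $A_\node$ in a ball of radius $(\opt+1)\,l(\node)$, and a comparison of the two via $r_d$. Strategically there is nothing to object to, and the containment half is correct modulo the reach bookkeeping below.

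The genuine gap is in the disjointness step, which you rightly single out as the crux but then dispatch with the claim that the telescoping ``closes exactly tightly.'' It does not. The reach of $A_{\othernode_i}$ from the leaf images below $\othernode_i$ is not ``roughly $l(\othernode_i)/2$'': it is the leaf-ball radius $\tfrac12$ \emph{plus} the telescoping sum $\sum_\mu l(\mu)/4\approx l(\othernode_i)/2$ of the Minkowski radii, i.e.\ about $\tfrac12+l(\node)/4$. After the extra inflation by $l(\node)/4$, each set $A_{\othernode_i}\oplus B_d(l(\node)/4)$ therefore reaches $\tfrac12+l(\node)/2$ from its leaf images, so two such sets under distinct children are guaranteed disjoint only when the relevant leaf images are at distance at least $l(\node)+1$, whereas non-contraction gives only $l(\node)$. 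The failure is already visible at the first inductive step: for $l(\node)=1$ the children are leaves, the inflated sets are balls of radius $\tfrac12+\tfrac14=\tfrac34$ around points at distance $\ge 1$, and these overlap. The fix is to take the leaf balls with radius $\tfrac14$ (half the guaranteed separation of the leaf images, rather than all of it); then the reach of $A_\mu$ is exactly $l(\mu)/2$ at every level, the inflated children sets meet at most along a hyperplane, and the induction closes with equality. This changes $A^*$ at the leaves by a factor $2^d$, which is harmless for the final polylogarithmic approximation bound but means the lemma as literally stated (with $\Vol(B_d(\tfrac12))$ at the leaves) is not what your argument proves. As written, your proof's key inequality is false, so the gap is real even though it is repairable by renormalizing constants.
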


\paragraph{The algorithm.}
We are now ready to describe the embedding $f$ of $M_{\tree}$ into $\mathbb{R}^d$.
The intuition behind our algorithm is as follows.
The lower bound given by Lemma \ref{lemma:ultra_lower} implies that an embedding is nearly-optimal
if it results in sets $A_\node$ with small aspect ratio.
Our approach, however, is essentially reversed.  We first compute a hyperrectangular
partition of $\mathbb{R}^d$ into hyperrectangles with small aspect ratio.
The hyperrectangle computed for the leaves in $\tree$ will roughly 
correspond to the balls around the embedded points $x\in X$, so 
given the partition we will be able to obtain the embedding by placing $f(x)$
at the center of the hyperrectangle computed for~$x$. The weights we use for
the nodes of $\tree$ will be the values $A^*(\node)$, which using Lemma~\ref{lemma:ultra_lower}
can then be used to bound the distortion.

More precisely, the algorithm works as follows.
\begin{enumerate}
\item \label{step1} Compute for each node $\node\in\tree$ the value $A^*_\node$.
\item \label{step2} Compute a hyperrectangular partition with slack 
      $\eps:= \min(\frac{1}{3},\frac{1}{\log\Delta})$
      for~$\tree$, where $A^*_\node$ is used as the weight of a node~$\node$.
      Note that $\tree$ is not properly weighted: the weight of $\myroot(\tree)$ will be greater
      than~1 and the weight of an internal node~$\node$ is larger than the sum of the
      weights of its children. However, we can still apply Theorem~\ref{theorem:rect_partition}.
      Indeed, by scaling the weights  appropriately we can ensure that the root
      has weight~1, and the fact that the weight of an internal node~$\node$ is larger than the sum of the
      weights of its children only makes it easier to obtain a small aspect ratio.
      Thus we can compute a hyperrectangular partition for $\tree$ whose
      rectangular aspect ratio is at most $\log \Delta$.
\item \label{step3} Let $P(\node)$ denote the hyperrectangle computed 
      for node $\node\in \tree$ in Step~\ref{step2}.
      We slightly modify the hyperrectangles $P(\node)$, as follows. 
      Starting from the root of $\tree$, we traverse all the nodes of $\tree$.
      When we visit an internal node $\node$, we shrink all hyperrectangles of the nodes
      in the subtree rooted at $\node$ by a factor of $1-1/\log\Delta$, 
      with the center of the (current) hyperrectangle $P'(\node)$ of $\node$ 
      being the fixed point in the transformation.
      Note that $P'(\node)$ itself is not shrunk. Thus the shrinking step moves the
      hyperrectangles contained inside $P'(\node)$ away from its boundary and towards
      its center, thus preventing points in different subtrees to get too close
      to each other.
\item \label{step4} Let $P'(\node)$ denote the hyperrectangle computed in Step~\ref{step3}.
      Then we define the embedding $f(x)$ of a point $x\in X$ to be the center of
      the hyperrectangle~$P'(x)$.
\end{enumerate}
It remains to bound the distortion of $f$. We will need the following observation,
which bounds the diameter of a hyperrectangle in terms of its volume and aspect ratio.
\begin{observation}\label{obs:diam-vs-vol}
Let $P$ be a hyperrectangle in $\Reals^d$ with aspect ratio at most~$\alpha$. 
Then, assuming $\alpha\geq\sqrt{d-1}$,
\[
\diam(P) \leq \alpha\sqrt{2} \cdot \Vol(P)^{1/d}.
\]
\end{observation}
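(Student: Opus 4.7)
Let me denote the side lengths of $P$ by $s_1,\ldots,s_d$, so that $\diam(P) = \bigl(\sum_{i=1}^d s_i^2\bigr)^{1/2}$ and $\Vol(P) = \prod_{i=1}^d s_i$. The aspect-ratio hypothesis reads $\max_i s_i \leq \alpha \cdot \min_i s_i$. The plan is to transform the claim into an inequality purely among the $s_i$ and to settle it using this constraint together with $\alpha \geq \sqrt{d-1}$.

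The first step is to bound the largest side length in terms of $\Vol(P)^{1/d}$. Since the minimum of a collection of positive numbers is at most their geometric mean, $\min_i s_i \leq \bigl(\prod_i s_i\bigr)^{1/d} = \Vol(P)^{1/d}$, and the aspect-ratio constraint then yields
\[
\max_i s_i \;\leq\; \alpha \cdot \min_i s_i \;\leq\; \alpha \cdot \Vol(P)^{1/d}.
\]
The second step is to bound $\sum_i s_i^2$, using the estimate $\sum_i s_i^2 \leq d \cdot (\max_i s_i)^2$ together with the rewriting $d \leq 1 + \alpha^2 \leq 2\alpha^2$ afforded by the hypothesis $\alpha \geq \sqrt{d-1}$. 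Assembled together, these two estimates are intended to give
\[
\diam(P)^2 \;\leq\; 2\alpha^2 \cdot \Vol(P)^{2/d},
\]
so that taking square roots establishes the claim.

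The main subtlety is that the two ingredients I would combine are tight in opposite regimes: the bound $\max_i s_i \leq \alpha \Vol(P)^{1/d}$ is sharp when a single side dominates, whereas $\sum_i s_i^2 \leq d \cdot (\max_i s_i)^2$ is sharp when all sides are comparable. A direct multiplication of the two estimates gives away a factor of $\alpha$, so the hypothesis $\alpha \geq \sqrt{d-1}$ must be invoked to reconcile the regimes and absorb the slack into the constant $\sqrt{2}$. I expect the main obstacle in writing the proof to be performing this balancing cleanly, most likely by treating separately the contribution of those $s_i$ which are close to $\max_i s_i$ (where the volume bound tightens to $\max_i s_i \approx \Vol(P)^{1/d}$) and those which are close to $\min_i s_i$ (where their contribution to $\sum s_i^2$ is small), and verifying that the precise calibration $\alpha^2 \geq d-1$ suffices to make both contributions absorb into $2\alpha^2 \Vol(P)^{2/d}$.
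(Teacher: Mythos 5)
The paper states this Observation without proof, so the only thing to assess is whether your plan closes. It does not, and the gap is not one you can patch. Your two concrete estimates are individually correct: $\max_i s_i \leq \alpha \min_i s_i \leq \alpha \Vol(P)^{1/d}$ and $\diam(P)^2 \leq d \,(\max_i s_i)^2$ with $d \leq 2\alpha^2$. But combining them yields $\diam(P)^2 \leq d\,\alpha^2\,\Vol(P)^{2/d} \leq 2\alpha^4\,\Vol(P)^{2/d}$, i.e.\ $\diam(P) \leq \sqrt{2}\,\alpha^2\,\Vol(P)^{1/d}$ --- an extra factor of $\alpha$, exactly as you suspected. The ``balancing'' you defer to (splitting coordinates into those near $\max_i s_i$ and those near $\min_i s_i$) cannot recover that factor, because the inequality as stated is actually false at the boundary of its hypothesis once $d$ is large. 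Take $d=20$, $\alpha=\sqrt{19}=\sqrt{d-1}$, and the hyperrectangle with three sides of length $\sqrt{19}$ and seventeen sides of length $1$. Then $\RAR(P)=\sqrt{19}=\alpha$, $\diam(P)=\sqrt{3\cdot 19+17}=\sqrt{74}\approx 8.60$, while $\Vol(P)=19^{3/2}$ gives $\alpha\sqrt{2}\,\Vol(P)^{1/20}=\sqrt{38}\cdot 19^{3/40}\approx 7.69 < \diam(P)$. (One checks that the ratio $\sum_i s_i^2/(\prod_i s_i)^{2/d}$ is maximized at a vertex of the box $[\min_i s_i,\ \alpha\min_i s_i]^d$, and for $k$ long sides it equals $(k\alpha^2+d-k)\alpha^{-2k/d}$, which exceeds $2\alpha^2$ for intermediate $k$ once $d\gtrsim 15$ and $\alpha^2=d-1$.) So no rearrangement of your two regimes can prove the claim as written.

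What your argument does establish cleanly is $\diam(P)\leq \sqrt{d}\,\alpha\,\Vol(P)^{1/d}$ (and hence $\sqrt{2}\,\alpha^2\,\Vol(P)^{1/d}$ under $\alpha\geq\sqrt{d-1}$), and you should state and prove that weaker bound instead. This costs nothing downstream: the Observation is invoked in Lemma~\ref{lemma:ultra_expansion} with $\alpha=\log\Delta$ and fixed $d$, and since $r_d(V)=\Theta(\sqrt{d}\,V^{1/d})$, the substitute bound $\diam(P(\node))\leq\sqrt{d}\cdot\log\Delta\cdot\Vol(P(\node))^{1/d}=O(\log\Delta\cdot r_d(A^*_\node))$ yields the same expansion estimate. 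In short: stop at the $\alpha^2$ (equivalently $\sqrt{d}\,\alpha$) bound rather than promising a calibration step that the statement's hypotheses do not support.
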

From now on we assume that $\log\Delta\geq \sqrt{d-1}$.
\begin{lemma}\label{lemma:ultra_expansion}
The expansion of $f$ is $O\left(\frac{1}{\sqrt{d}} \cdot \log{\Delta} \cdot \opt\right)$.
\end{lemma}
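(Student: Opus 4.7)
The plan is to bound, for any two leaves $x,y\in X$, the ratio $\|f(x)-f(y)\|/D_\tree(x,y)$ by chaining three estimates: an upper bound on the diameter of the relevant hyperrectangle coming from the aspect-ratio control of Step~\ref{step2}, the volume bound coming from the slack guarantee, and the lower bound on $\opt$ coming from Lemma~\ref{lemma:ultra_lower}. Letting $\node$ be the lowest common ancestor of $x$ and $y$, I would start by observing that $D_\tree(x,y)=l(\node)$ and that $f(x),f(y)\in P'(\node)$, since the shrinkings of Step~\ref{step3} preserve the nesting of the partition. Moreover, these shrinkings only reduce side lengths, so $\|f(x)-f(y)\|\leq\diam(P'(\node))\leq\diam(P(\node))$, and it remains to estimate $\diam(P(\node))/l(\node)$.

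Next, I would combine Theorem~\ref{theorem:rect_partition} with Observation~\ref{obs:diam-vs-vol}: since $\RAR(P(\node))\leq\log\Delta$ and the standing assumption gives $\log\Delta\geq\sqrt{d-1}$, the observation applies and yields $\diam(P(\node))=O(\log\Delta\cdot\Vol(P(\node))^{1/d})$. For the volume, note that the slack of the partition (with $\eps=1/\log\Delta$) expands the ratio $\Vol(R)/w$ by at most a factor $1/(1-\eps)$ per level; over the $O(\log\Delta)$ levels of $\tree$ this compounds to only a constant factor. Combined with the rescaling that sets the root's weight to $1$ (i.e., dividing the weights by $A^*(\myroot(\tree))$), this gives $\Vol(P(\node))=\Theta(A^*(\node)/A^*(\myroot(\tree)))$. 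Rearranging the lower bound of Lemma~\ref{lemma:ultra_lower} and using $r_d(V)=\Theta(\sqrt d\, V^{1/d})$ yields $A^*(\node)^{1/d}=O(l(\node)\cdot\opt/\sqrt d)$, so putting everything together:
\[
\frac{\|f(x)-f(y)\|}{l(\node)}
  \ = \ O\!\left(\frac{\log\Delta\cdot\opt}{\sqrt d\cdot A^*(\myroot(\tree))^{1/d}}\right).
\]

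The main subtle point, and the one I would verify last, is that the stray factor $1/A^*(\myroot(\tree))^{1/d}$ is in fact $O(1)$. From the recursive definition of $A^*$, each child contributes at least $\Vol(B_d(l(\myroot(\tree))/4))$ to $A^*(\myroot(\tree))$; combined with $l(\myroot(\tree))=\Theta(\Delta)$ and the standard estimate $\Vol(B_d(r))^{1/d}=\Theta(r/\sqrt d)$, this gives $A^*(\myroot(\tree))^{1/d}=\Omega(\Delta/\sqrt d)$. The assumption $\log\Delta\geq\sqrt{d-1}$ then implies $\Delta\geq 2^{\sqrt{d-1}}\geq\sqrt d$, so the denominator is $\Omega(1)$ and the claimed expansion bound $O(\log\Delta\cdot\opt/\sqrt d)$ follows.
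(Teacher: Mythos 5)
Your proposal is correct and follows essentially the same chain as the paper's proof: bound $\|f(x)-f(y)\|$ by $\diam(P(\node))$ for the lowest common ancestor $\node$, convert diameter to volume via the aspect-ratio bound and Observation~\ref{obs:diam-vs-vol}, relate the volume to $A^*(\node)$ via the slack guarantee, and invoke Lemma~\ref{lemma:ultra_lower}. The only difference is that you track the normalization factor $A^*(\myroot(\tree))^{1/d}$ explicitly and show it is $\Omega(1)$, a point the paper's proof glosses over by writing $\Vol(P(\node))\leq A^*_{\node}$ directly.
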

\begin{proof}
Consider points $x,y\in X$. Let $\node$ be the lowest common ancestor of $x$ and $y$ in~$\tree$.
We have $D_{\tree}(x,y)=l(\node)$.
Both $f(x)$ and $f(y)$ are contained in $P'(\node)$, which in turn is contained in $P(\node)$.
Hence, $\diam(P(\node))$ gives an upper bound on $\dist{x}{y}$.
Note that $\RAR(P(\node)) \leq \log\Delta$ and that $\Vol(P(\node))\leq A^*_{\node}$.
(We do not necessarily have $\Vol(P(\node))= A^*_{\node}$ because of the slack in the partition.)
Hence,
\begin{align*}
\dist{x}{y} 
  &\leq \textstyle\diam(P(\node)) \\
  &\leq \textstyle\log\Delta \cdot \sqrt{2} \cdot \Vol(P(\node))^{1/d} \\
  &\leq \textstyle\log\Delta \cdot \sqrt{2} \cdot \left(A^*_{\node}\right)^{1/d} \\
  &= \textstyle O\left( \log\Delta \cdot r_d(A^*_{\node}) \cdot \frac{1}{\sqrt{d}} \right) \\
  &= \textstyle O\left( \log\Delta \cdot (\opt+1)\cdot l(\node) \cdot \frac{1}{\sqrt{d}} \right) \\
  &= \textstyle O\left( \log\Delta \cdot \opt\cdot D_\tree(x,y) \cdot \frac{1}{\sqrt{d}} \right),
\end{align*}

where the second-to-last transition follows from Lemma~\ref{lemma:ultra_lower}. This shows that the expansion is $O(\log{\Delta} \cdot \opt \cdot \frac{1}{\sqrt{d}})$.
\end{proof}

\begin{lemma}\label{lemma:ultra_contraction}
The contraction of $f$ is $O\left(\sqrt{d}\cdot \log^{2}{\Delta}\right)$.
\end{lemma}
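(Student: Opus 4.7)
Let $\node$ be the nearest common ancestor of $x$ and $y$ in $\tree$, and let $\othernode_x,\othernode_y$ be the two children of $\node$ on the root-to-$x$ and root-to-$y$ paths. Since $D_\tree(x,y)=l(\node)$, it suffices to show $\|f(x)-f(y)\|=\Omega\!\left(l(\node)/(\sqrt{d}\log^2\Delta)\right)$, which then yields the claimed contraction bound. The key geometric observation is that the shrinking step at $\othernode_x$ in Step~\ref{step3}, with contraction factor $1-\eta$ (for $\eta=1/\log\Delta$) about the center $c_x$ of $P'(\othernode_x)$, places every subsequent hyperrectangle of a descendant of $\othernode_x$ inside the \emph{core} $S_x:=c_x+(1-\eta)(P'(\othernode_x)-c_x)$. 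This is immediate right after $\othernode_x$'s processing, and it is preserved by further shrinkings at deeper nodes because each is a contraction about a point already in the convex set $S_x$. Hence $f(x)\in S_x$, and symmetrically $f(y)\in S_y$. The gap between $S_x$ and $\partial P'(\othernode_x)$ is $\eta\cdot s_{\min}/2$, where $s_{\min}$ denotes the shortest side length of $P'(\othernode_x)$; moreover, $P'(\othernode_x)$ and $P'(\othernode_y)$ are disjoint, being images of the disjoint children of $\node$ from the slack partition under a common affine contraction about the center of $P'(\node)$. Therefore the line segment $\overline{f(x)f(y)}$ must leave $P'(\othernode_x)$ and later enter $P'(\othernode_y)$, paying the boundary gap at both ends:
\[
\|f(x)-f(y)\|\ \geq\ (\eta/2)\,\bigl(s_{\min}(P'(\othernode_x))+s_{\min}(P'(\othernode_y))\bigr).
\]

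To lower-bound $s_{\min}(P'(\othernode_x))$ in terms of $l(\node)$, recall that Theorem~\ref{theorem:rect_partition} gives rectangular aspect ratio of $P(\othernode_x)$ (and hence of $P'(\othernode_x)$, since uniform scaling preserves aspect ratio) at most $\log\Delta$, so $s_{\min}(P(\othernode_x))\geq \Vol(P(\othernode_x))^{1/d}/\log\Delta$. Iterating the slack relation along the $O(\log\Delta)$-deep root-to-$\othernode_x$ path, with $\eps=1/\log\Delta$, gives $\Vol(P(\othernode_x))\geq (1-\eps)^{O(\log\Delta)}A^*_{\othernode_x}=\Omega(A^*_{\othernode_x})$. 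The compound shrinking from Step~\ref{step3} multiplies volumes by $(1-\eta)^{d\cdot O(\log\Delta)}$; although exponentially small in $d$, its $d$-th root is $\Omega(1)$, so $s_{\min}(P'(\othernode_x))=\Omega\!\left(A^*_{\othernode_x}{}^{1/d}/\log\Delta\right)$. Finally, the recursive definition of $A^*$ gives $A^*_{\othernode_x}\geq \Vol(B_d(l(\othernode_x)/4))$, and combining with $r_d(V)=\Theta(\sqrt{d}V^{1/d})$ and $l(\othernode_x)=l(\node)/2$ (since $\tree$ is a $2$-HST) yields $A^*_{\othernode_x}{}^{1/d}=\Omega(l(\node)/\sqrt{d})$.

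Putting the estimates together gives
$
\|f(x)-f(y)\|\geq (\eta/2)\,s_{\min}(P'(\othernode_x))=\Omega\!\left(l(\node)/(\sqrt{d}\log^2\Delta)\right),
$
which proves the lemma. The main obstacle in filling in the details will be the careful bookkeeping of the two compounding multiplicative losses along the $O(\log\Delta)$-deep root-to-$\othernode_x$ path---the slack factor of $1-\eps$ per level in the partition, and the shrinking factor of $1-\eta$ per ancestor in Step~\ref{step3}---and in particular checking that the $d$-dependent volume loss $(1-\eta)^{d\cdot O(\log\Delta)}$ costs only a constant factor once the $d$-th root is taken. The geometric core is short and relies on the fact that the shrunken cores $S_x,S_y$ lie strictly inside the disjoint hyperrectangles $P'(\othernode_x),P'(\othernode_y)$, so the segment $\overline{f(x)f(y)}$ accrues the boundary gap on both ends.
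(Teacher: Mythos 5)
Your argument is correct and follows essentially the same route as the paper's proof: the shrinking step of Step~3 buys a gap of $\frac{1}{\log\Delta}\cdot\frac{s_{\min}}{2}$ between $f(x)$ and $\bd P'(\othernode_x)$, the sibling hyperrectangles are disjoint, and $s_{\min}$ is lower-bounded via the aspect-ratio guarantee, $\Vol(P'(\othernode_x))=\Omega(A^*_{\othernode_x})$, and $(A^*_{\othernode_x})^{1/d}=\Omega(l(\othernode_x)/\sqrt{d})$. The only cosmetic differences are that you treat the case where $x,y$ are siblings uniformly with the general case (the paper splits them), and you spell out the nested-core containment and the compounded $(1-\eps)$ and $(1-1/\log\Delta)$ losses more explicitly than the paper does.
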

\begin{proof}
Since $\tree$ is a 2-HST, we have
$\Delta=l(\myroot(\tree))=2^{\mydepth(\tree)}$.
Hence,  $\mydepth(\tree)=\log\Delta$. It follows that
for each node $\node\in\tree$,
\begin{align*}
\Vol(P'(\node)) 
  &=  \Omega \left( \left(1-\frac{1}{\log\Delta}\right)^{\log\Delta} \cdot \Vol(P(\node)) \right) 
    =  \Omega \left( \Vol(P(\node)) \right) \\
  &=  \Omega \left( \left(1-\frac{1}{\log\Delta}\right)^{\log\Delta} \cdot A^*_{\node} \right) 
    =  \Omega \left( A^*_{\node} \right).
\end{align*}
Consider points $x,y\in X$, and let $\node$ be the lowest common ancestor of $x$ and $y$ in~$\tree$.
We will consider the following two cases for $\node$:
\begin{itemize}
\item \emph{Case 1:  $\node$ is the parent of $x$ and $y$ in $\tree$.} \\
  Since the minimum distance in $M_{\tree}$ is~1, it follows that $D_{\tree}(x,y)=1$.
  By construction, $f(x)$ is the center of $P'(x)$.
  Let $t$ be the distance between $f(x)$ and $\bd P'(x)$.
  Since $\RAR(P'(x))\leq 1/\eps$, we have
  \[
  t \geq \eps \cdot \Vol(P'(x))^{1/d} 
      =  \Omega\left( \eps \cdot (A^*_x)^{1/d} \right)
      =  \Omega\left(\eps \cdot \frac{1}{\sqrt{d}}\right)
      =  \Omega\left(\frac{1}{\sqrt{d}\log\Delta}\right).
  \]
  Thus,
  $\dist{f(x)}{f(y)} \geq  t = \Omega\left(\frac{D(x,y)}{\sqrt{d}\log\Delta}\right)$.
\item \emph{Case 2:  $\node$ is not the parent of $x$ and $y$ in $\tree$.} \\
  Let $\othernode$ be the child of $\node$ that lies on the path from $\node$ to $x$.
  Let $t$ be the distance between $x$ and $\bd P'(\othernode)$.
  Because of the shrinking performed in Step~\ref{step3}, we know that
  $t\geq (1/\log\Delta)\cdot (s/2)$, where $s$ is the length of the shortest edge
  of $P'(\othernode)$. Since $\RAR(P'(\othernode)) \leq 1/\eps$, we have
  $s\geq \eps^{1-1/d}\cdot\Vol(P'(\othernode))^{1/d}$. Hence,
\begin{align*}
  t &\geq \textstyle\frac{1}{\log\Delta} \cdot \frac{s}{2} \\
    &= \textstyle\Omega\left( \frac{\eps^{1-1/d}}{\log\Delta} \cdot \Vol(P'(\othernode))^{1/d} \right) \\
    &= \textstyle\Omega\left( \frac{\eps^{1-1/d}}{\log\Delta} \cdot (A^*_{\othernode})^{1/d} \right) \\
    &= \textstyle\Omega\left( \frac{1}{\log^2\Delta} \cdot (A^*_{\othernode})^{1/d} \right) \\
    &= \textstyle\Omega\left( \frac{1}{\log^2\Delta} \cdot l(\othernode) \cdot \frac{1}{\sqrt{d}} \right) \\
    &= \textstyle\Omega\left( \frac{D(x,y)}{\sqrt{d}\log^2\Delta}  \right).
\end{align*}%
\end{itemize}%
\end{proof}

Combining Lemmas~\ref{lemma:ultra_expansion}~and~\ref{lemma:ultra_contraction}
we obtain the main result of the section.

\begin{theorem}
For any fixed $d\geq 2$, there exists a polynomial-time, $O\left(\sqrt{d}\cdot \log^3\Delta\right)$-approx\-i\-ma\-tion algorithm 
for the problem of embedding ultrametrics into $\mathbb{R}^d$ with minimum distortion.
\end{theorem}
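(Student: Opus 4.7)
The plan is to combine Lemmas~\ref{lemma:ultra_expansion}~and~\ref{lemma:ultra_contraction} to bound the distortion of $f$ on $M_{\tree}$, and then to transfer this guarantee back to the original ultrametric~$M$. Since the distortion of an embedding equals its expansion times its contraction, the two lemmas immediately yield that $f$ has distortion at most $O(\tfrac{\log\Delta}{\sqrt{d}}\cdot\opt)\cdot O(\sqrt{d}\log^2\Delta)=O(\log^3\Delta\cdot\opt)$ on $M_{\tree}$, where here $\opt$ is the optimum distortion for embedding $M_{\tree}$ into $\Reals^d$.

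To transfer the guarantee from $M_{\tree}$ back to $M$, I would use the fact, recalled at the start of the section, that $M$ embeds into the $2$-HST $M_{\tree}$ with distortion $2$, and that the two spaces share the same point set. Any optimal embedding of $M$ into $\Reals^d$ can be reinterpreted as an embedding of $M_{\tree}$ losing only a constant factor in distortion, so the optimum for $M_{\tree}$ is at most a constant times the optimum for $M$; conversely, any embedding of $M_{\tree}$ with distortion $c'$ produces one of $M$ with distortion $O(c')$. Composing with $f$ therefore gives an embedding of $M$ whose distortion is $\polylog(\Delta)$ times the optimum for $M$, fitting inside the stated $O(\sqrt{d}\log^3\Delta)$ bound.

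Next I would verify polynomial running time. Step~\ref{step1} computes $A^{*}$ by bottom-up dynamic programming on $\tree$; Step~\ref{step2} invokes the constructive algorithm implicit in the proof of Theorem~\ref{theorem:rect_partition}, which recursively applies Lemma~\ref{lemma:slack-cut} $O(n)$ times; Steps~\ref{step3}~and~\ref{step4} are coordinate transformations on $O(n)$ hyperrectangles. I would also dispose of the technical hypothesis $\log\Delta\geq\sqrt{d-1}$ used before Observation~\ref{obs:diam-vs-vol}: for fixed $d$ and $\Delta$ below this threshold $n$ is bounded by a constant, so a brute-force search over embeddings on a sufficiently fine grid suffices and fits the running-time and approximation guarantees trivially.

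The main obstacle, which is essentially bookkeeping, is to track the various constants produced along the way---those from the HST reduction, from the choice $\eps=\min(1/3,1/\log\Delta)$ of slack parameter, and from the factor $(1-1/\log\Delta)^{\log\Delta}=\Omega(1)$ absorbed by the shrinking in Step~\ref{step3}---and to confirm that they all fit inside the claimed bound. The conceptual content, namely the two separate estimates on the expansion and the contraction via the volume-preserving rectangular partition with slack, has already been established in the two preceding lemmas, so the theorem follows by putting them together with the HST reduction.
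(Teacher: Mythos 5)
Your proposal is correct and matches the paper's own proof, which consists entirely of the remark that the theorem follows by ``combining Lemmas~\ref{lemma:ultra_expansion} and~\ref{lemma:ultra_contraction}'': the product of the expansion bound $O(\frac{1}{\sqrt{d}}\log\Delta\cdot\opt)$ and the contraction bound $O(\sqrt{d}\log^{2}\Delta)$ gives distortion $O(\log^{3}\Delta\cdot\opt)$ for $M_{\tree}$, which fits inside the stated bound, and the transfer back to $M$ via the $2$-HST and the polynomial running time are exactly the points the paper settles in the section's preamble or leaves implicit. One small caveat in your extra bookkeeping: a bounded spread does not bound $n$ (consider $n$ points at mutual distance $1$), so the brute-force disposal of the case $\log\Delta<\sqrt{d-1}$ does not work as stated---but since the paper itself simply assumes $\log\Delta\geq\sqrt{d-1}$ without treating the complementary case, this does not distinguish your argument from theirs.
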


We remark that the running time of the above algorithm depends on the way the input is given.
If the ultrametric is given as a matrix of pairwise distances, then one needs to first compute a tree representation, with the points being the leaves of the tree.
For an $n$-point ultrametric, this task takes at least $\Omega(n^2)$ time.
Given such a tree representation, for any fixed dimension $d$, it is fairly easy to implement our algorithm in roughly quadratic time.
It is an interesting open problem to obtain an algorithm with near-linear running time.

As a final comment, note that while the approximation factor is relatively small (for instance, for $\Delta = n^{O(1)}$ and constant $d$,
it becomes $O(\log^3 n)$), the distortion itself can be much larger.
For example, embedding the uniform metric on $n$ vertices
(which is an ultrametric as well) into $\mathbb R^d$ requires distortion $\Omega(n^{1/d})$ for constant $d$.

\section*{Acknowledgments}
The authors wish to thank Roberto Tamassia for providing pointers to the applicability of 
polygonal partitions in visualization.

\bibliographystyle{abbrv}
\bibliography{bibfile}

\end{document}